\documentclass[10pt,aps,tightenlines,nofootinbib, twocolumn]{revtex4-2}

\pdfoutput=1
\usepackage{hyperref}
\hypersetup{
	colorlinks=true,
	urlcolor=blue,
	linkcolor=blue,
	citecolor=blue
}
\usepackage{url}
\usepackage{amsmath}
\usepackage{amssymb}
\usepackage{amsthm}
\usepackage{bbm}
\usepackage{mathtools}
\usepackage[normalem]{ulem}
\usepackage{dsfont}
\usepackage{mathrsfs}
\usepackage[makeroom]{cancel}
\usepackage{graphicx}
\usepackage{bm}
\usepackage{dsfont}
\usepackage[T1]{fontenc}
\usepackage[ddmmyyyy]{datetime}
\usepackage{physics}
\usepackage{xcolor}
\usepackage{thm-restate}
\usepackage{comment}

\newtheorem{thm}{Theorem}
\newtheorem{conj}{Conjecture}
\newtheorem{lem}{Lemma}

\newtheorem{defn}{Definition}

\newtheorem{prob}{Problem}

\def\fZ{\mathbb{Z}}
\def\fF{\mathbb{F}}

\def\ra{\rangle}
\def\la{\langle}
\def\id{\mathbb{I}}
\def\Tr{\text{Tr}}
\def\LOG{\operatorname{LOG}}

\newcommand{\jacobi}[2]{\qty(\frac{#1}{#2})}
\makeatletter
\def\Dated@name{}
\makeatother

\begin{document}
\allowdisplaybreaks

\title{Product Weyl-Heisenberg covariant MUBs and Maximizers of Magick}
\author{Bogdan S. Damski$^1$}
\author{Rafa{\l} Bistro\'n$^{1,2}$}
\author{Diego Ponterio$^1$}
\author{Jakub Czartowski$^3$}
\author{Karol Życzkowski$^{1,4}$}
\affiliation{$^1$Jagiellonian University, 
Faculty of Physics, Astronomy and Applied Computer Science,
{\L}ojasiewicza 11, 30-348 Krak\'ow, Poland}
\affiliation{$^2$Doctoral School of Exact and Natural Sciences, Jagiellonian University, ul. {\L}ojasiewicza 11, 30-348 Kraków, Poland}
\affiliation{$^3$School of Physics, Trinity College Dublin, Dublin 2, Ireland}
\affiliation{$^4$Center for Theoretical Physics, Polish Academy of Sciences, Warsaw}

\date{
March 16, 2026}

\begin{abstract}
In this work we investigate discrete structures in product Hilbert spaces.
For monopartite systems
of size $d$
one relies on the Weyl–Heisenberg group $WH(d)$, while in the case of composite Hilbert spaces we identify designs covariant with respect to the product group, $[WH(p)]^{\otimes n}$. In analogy with magic — a quantity attaining its maximum for states fiducial with respect to $WH(d)$ — we introduce a similar notion of magick, defined with respect to the product group. The maximum of this quantity over all equimodular vectors 
yields fiducial states that generate $d$ \textit{a priori} isoentangled mutually unbiased bases (MUBs), which, when supplemented by the identity, form their complete set. Such fiducial states are explicitly constructed in all prime-power dimensions $p^n$ with $p\ge 3$. The result for $p\ge 5$  extends the construction of Klappenecker and
R{\"o}tteler, whereas for $p=3$ it is mathematically distinct and is based on Galois rings. The global maximum of magick for 
$d=2^3$ yields fiducial states corresponding to the  symmetric informationally complete (SIC) 
generalized measurement of Hoggar.
Our approach 
feeds into
a unifying perspective in which highly symmetric quantum designs emerge from fiducial states with extremal properties via structured group-orbit constructions.

\end{abstract}

\maketitle

\twocolumngrid

\section{Introduction}

Mutually unbiased bases (henceforth MUBs)~\cite{Wootters1989} and symmetric informationally complete positive operator-valued measurements (SIC-POVMs or SICs)~\cite{DAriano2004} are highly symmetric structures in the Hilbert space of quantum states that lead to generalized quantum measurements with distinguished properties. These particular constellations of pure quantum states have been the object of intense investigation in the last decades, both in the quantum information~\cite{mcnulty2024mutually, DURT2010} and in the mathematical community~\cite{https://doi.org/10.48550/arxiv.2501.03970}.
Two orthonormal bases of a Hilbert space are said to be \textit{unbiased} if each vector from either basis has constant overlap (up to a phase) with any vector of the other basis or, heuristically, when they are as different as possible. This feature encodes the \textit{complementarity} of the physical observables associated with an unbiased pair, a peculiar quantum trait highlighted already by Bohr~\cite{BOHR1928}, which has repercussions in applications like state tomography~\cite{Adamson2010, FernndezPrez2011}, cryptography~\cite{BechmannPasquinucci2000, Cerf2002, Yu2008}, and entanglement detection~\cite{Spengler2012}.

Full sets of $d+1$ MUBs are known to exist for any 
prime power dimension, $d=p^n$, 
and several constructions
\cite{Ivonovic1981,Wootters1989}
rely
on the \textit{Galois field} $\mathbb{F}_{p^n}$ associated 
with the dimension $d$.
On the other hand, 
strong numerical evidence 
suggests that 
such sets do not exist in composite  dimensions~\cite{mcnulty2024mutually}. 

Whenever a complete set of MUBs exists, it provides a complex projective 2-design \cite{klappenecker2005mutually}, a property shared with SIC generalized measurements: 
such structures faithfully reproduce the Haar average of any degree-2 polynomial. 
As a consequence,
the mean value of the purity of the reduced states for any bipartition, a quadratic indicator of entanglement, 
averaged over a full set of MUBs or SICs is known.

With the notable exception of the Hoggar lines~\cite{Hoggar1998} in $d\,=\,8$, the known instances of SICs are usually constructed by the action of the Weyl-Heisenberg (WH) group over the full space $\mathcal{H}_d$ on a so-called fiducial state~\cite{Gedik2024}. This approach, in contrast, is not often used for states making up full sets of unbiased bases. 
However, it turns out that the two constellations can be jointly studied in the context of group frames. 
The authors of a recent work~\cite{Feng2024}
find a MUB fiducial vector in $\mathcal{H}_p$, \textit{p} prime, which they use to construct \textit{p} WH-covariant MUBs in the form of mutually unbiased complex Hadamard matrices \cite{TZ05}: 
such a collection can always be completed to a full set by the addition of the identity matrix.

Moreover,
it was shown~\cite{Feng2024}
that whenever 
a SIC or a full set of MUBs exist, the corresponding 
fiducial vectors maximize the \textit{magic} -- a measure of the distance of a state from the set of stabilizers \cite{Bravyi2005, Oliviero2022} -- over the entire Hilbert space or over all states with equimodular components, for SICs or for MUBs, respectively. Nonetheless, these results of~\cite{Feng2024} directly apply to single-partite systems, with examples provided only for prime dimensions. 

The present work aims to extend such a framework for multipartite, particularly prime-power dimensional quantum systems $\mathcal{H}_p^{\otimes n}$, for structures that we call \textit{a priori} -- or \textit{explicitly} -- isoentangled. By this label we mean that the structures are generated by local operations performed on a single state, and are thus isoentangled by construction. This notion can be contrasted with one of \textit{a posteriori} isoentangled structures, for which all the states forming  a given configuration are subsequently found to share the same degree of entanglement. Examples of \textit{a posteriori} isoentanglement are the SIC of  
 Zhu, Teo, and Englert \cite{Zhu2010isoSIC} and
 the full set of 5 isoentangled MUBs presented in \cite{Czartowski2020}, both for two-qubit systems.  
In analogy to magic, based on the global WH group \cite{Dai2022,Feng2022,Feng2024}, we propose  \textit{magick}\footnote{The term ``\textit{magick}'' is frequently used in fantasy novels, including a recent one where sorcery as a subject of academic endeavour under the said name is one of the main tropes~\cite{KuangKatabasis}.}(or \textit{product magic}), derived from the product of the local WH groups, and prove that this quantity is maximized by SIC or MUB fiducial states 
with maximization performed over the whole space, for SICs, and over equimodular states, for MUBs. 

Furthermore, we construct fiducial vectors in systems of prime-power dimension $d=p^n$ with $p\geq 3$.  For $p \geq 5$ our result extends the known construction of Klappenecker and R{\"o}tteler \cite{klappenecker2003constructions}, based on the Galois field $\fF_{p^n}$, by introducing a new field-valued parameter $a \in \fF_{p^n}$ leading to a family of inequivalent constructions. For $p = 3$ we present a novel construction based on the interplay between the Galois ring $GR(9,n)$ and its residue field $GR(9,n)/(3) \approx F_{3^n}$. Thanks to this approach, we were able to circumvent the known result of the nonexistence of Alltop sequences for fields of characteristic $3^n$ \cite{Hall2012}, which prevents a standard construction based directly on Galois fields.

Lastly, we present two  
sporadic full sets of MUBs that were found in dimensions $d=4$ and $9$ using different methods. Although the former set,  derived in the standard way from the full set of five MUBs for two qubits~\cite{Czartowski2020}, turns out to be a product WH orbit, the latter set does not share this property.

This paper is structured as follows. 
In Section \ref{sec:setting} we present fundamental notions like SICs, MUBs and unbiased complex
Hadamard matrices, along with a brief introduction to quantum entanglement and magic quantum states. 
Section \ref{Extrema} is devoted to the definition of the product magic, or \textit{magick}, upon which we therein derive and show our results for local-WH-covariant full sets of MUBs and SICs.
Next, in a short interlude we discuss known constructions of MUBs together with their properties.
In Section \ref{sec:fiducial} we present a construction of fiducial vectors corresponding to local WH-covariant MUBs for prime-power dimensions with $p\geq 5$, as well as the one for $p = 3$, and state a conjecture that there is no product WH-covariant full set of MUBs for more than two qubits. Additionally, we present two sporadic full sets for systems of two qubits and two qutrits, with the latter 
obtained from three fiducial states under a subgroup of the product-WH group.
Finally, in Section \ref{sec:summary} we summarize our findings 
and discuss some standing open problems about SICs and MUBs.

\section{Setting the scene}\label{sec:setting}

In this Section we recall basic notions for our work. We start with definitions of mutually unbiased bases (MUB) and symmetric informationally complete (SIC) generalized measurements.

\begin{defn}[MUB]
    We say that the orthonormal bases $B_1 = \qty{\ket{e_i}}$ and $B_2 = \qty{\ket{f_j}}$ in $\mathcal{H}_d$ are {\sl unbiased} if the squared overlap between any two states taken from different bases is constant,
    \begin{equation}
        \abs{\ip{e_i}{f_j}}^2 = \frac{1}{d}.
    \end{equation}
    A set of orthonormal bases $\qty{B_i}$ is said to be 
    {\sl mutually unbiased} if all bases are pairwise unbiased.
\end{defn}

It is evident from the definition that a global change of basis does not affect the mutual unbiasedness of two sets of vectors; as a consequence, any unbiased pair is equivalent to a pair composed of a unitary matrix with equimodular entries -- known as a \textit{complex Hadamard matrix}
-- with the addition of the identity matrix.
In the \textit{d}-dimensional space $ \mathcal{H}_d $ there are at most $d\,+\,1$ bases that are all pairwise unbiased, the bound being saturated for all prime-power dimensions $d=p^n$, for some integer \textit{n} and prime \textit{p}. The existence of such \textit{full sets} of MUBs in composite dimensions $d\neq p^n$ is a longstanding open problem, even for the smallest case of dimension $2\cdot 3\,=\,6$, for which it was conjectured already by Zauner in 1999 that only sets of no more than 3 MUBs exist~\cite{ZAUNER2011, mcnulty2024mutually}. By the global unitary invariance of MUBs, it follows that the problem of finding a full set in dimension $d$ is equivalent to finding $d$ mutually unbiased complex Hadamard matrices of size $d$, supplemented by identity \cite{DURT2010}.

\begin{defn}[SIC]
    A collection of $d^2$ states $\qty{\ket{\psi_i}}$ in $\mathcal{H}_d$ is said to constitute a {\sl symmetric informationally complete} (SIC) measurement 
    if the squared absolute overlap between any two states is given by
    \begin{equation}
        \abs{\ip{\psi_i}{\psi_j}}^2 = \frac{d \delta_{ij} + 1}{d+1}.
    \end{equation}
\end{defn}

SIC measurements are known to mathematicians as \textit{complex equiangular lines}~\cite{Godsil2009}. They also play a significant role in quantum foundations as a fundamental entity in the Quantum Bayesianist (or QBist) approach to quantum mechanics \cite{Fuchs2013, DeBrota2020}, allowing the representation of a quantum state solely in terms of the probabilities of measurement outcomes. Much like for full sets of MUBs, the existence of SICs in arbitrary dimension is still an open problem which was, again, first conjectured by Zauner~\cite{ZAUNER2011}. With the notable exception of the Hoggar lines in $d\,=\,8$~\cite{Hoggar1998}, the known instances of SICs are usually constructed by the action of the \textit{Weyl-Heisenberg} (WH) group on the full space $\mathcal{H}_d$ on a so-called fiducial state~\cite{ZAUNER2011, Gedik2024}.

 
 
  
 

\begin{defn}[Weyl-Heisenberg group]
    Consider the shift and clock operators $X, Z\in \mathcal{B}(\mathcal{H}_d)$ defined by their action on the basis states as
    \begin{align}
        X\ket{i} = \ket{i+1\text{ mod }d} && Z\ket{i} = \omega^i_d \ket{i}
    \end{align}
    with $\omega_d = e^{\mathrm{i} 2\pi /d}$. The Weyl-Heisenberg 
    operators $W_{kl}$ are defined as
    \begin{equation}
        W_{kl} = \tau_d^{kl} X^k Z^l \text{ with } \tau = - e^{\mathrm{i}\pi/d}.
    \end{equation}
\end{defn}

The Weyl-Heisenberg group naturally singles out a discrete set of pure quantum states:

\begin{defn}[Stabilizer state]
A pure quantum state in $\mathcal{H}_d$ is said to be a stabilizer if it is a simultaneous eigenstate with unit eigenvalue of a maximal Abelian subgroup of the WH group.  
\end{defn}

The definition of stabilizer states is typically extended to a convex combination of (pure) stabilizer states. 
States outside the polytope of stabilizers are generally referred to as \textit{magic states}, and their distance from said polytope is a common measure of their \textit{non-stabilizerness}, or \textit{magic}.

Finally, one can extend the WH group into its normalizer, called the Clifford group.


\begin{defn}[Clifford group]
The Clifford group $\mathcal{C}_d$ over $\mathcal{H}_d$ is the \text{normalizer} of the WH group, that is, the subset of $\mathcal{U}_d$ such that for all $C \in \mathcal{C}_d$ and for all $W_{kl}$ in the WH group, \begin{equation}
    \label{clifford}
    C W_{kl} C^{\dagger} = W_{k'l'}
\end{equation}for some $W_{k'l'}$ in the WH group.
\end{defn}

The WH group, together with its stabilizers, is a crucial object in quantum computation. Operations given from its Clifford group applied on stabilizer states can be efficiently classically simulated~\cite{Dai2022}. 
This leads to the Gottesman-Knill theorem~\cite{Gottesman97}:
only quantum states that are not \textit{stabilizers} can give rise to the superpolynomial quantum advantage. 

To quantify how different given state is from any stabilizer various quantifiers of non-stabilizerness were proposed \cite{Oliviero2022, Bravyi2005}. In present
work we extend a measure called {\sl magic} \cite{Leone2022stabREntropy, Dai2022,Feng2022,Feng2024} with simple explicit definition.

\begin{defn}[Magic]
\label{m_def}
For any quantum state $\rho$ of a system $\mathcal{H}_d$, a quantifier of its magic is given by
\begin{equation}
m(\rho) = \sum_{k,l = 0}^{d-1} \abs{\mathrm{Tr}[W_{{kl}}\, \rho]}.
\end{equation}
\end{defn}
\noindent
This quantity possesses many useful properties. It is convex, constant under Clifford operations and for any pure state $|\phi\ra$  it magic is bounded $m(|\phi\ra)\geq d$ with equality achieved only for stabilizer states. 

In what follows, we make extensive use of the product of local WH groups, for convenience denoted in the rest of the text by the bold notation,
\begin{equation}
    \begin{aligned}
        \vb{W}_{\vb{kl}} & = \bigotimes_i W_{k_i l_i} \\
        & = \left(\prod_i \tau_{d}^{k_il_i}\right)\left(\bigotimes_i X^{k_i}Z^{l_i}\right),
    \end{aligned}
\end{equation}
where $\vb{k}, \vb{l}\in \qty{0,\hdots,d-1}^n$ are composite indices for multipartite operators. For simplicity, we will refer to this structure as the \textit{product WH group}, in contrast to the global group over the whole Hilbert space $\mathcal{H}_{\prod_i d_i}$.

Lastly, we recall some relevant facts about the entanglement of quantum states. A bipartite pure state in $\mathcal{H}=\mathcal{H}_A \otimes \mathcal{H}_B$ is said to be entangled if it cannot be written as the tensor product of a vector in $\mathcal{H}_A$ and a vector in $\mathcal{H}_B$:
\begin{equation*}
    \ket{\Psi} \neq \ket{\psi_A} \otimes \ket{\psi_B}.
\end{equation*}
The amount of entanglement of $\ket{\Psi}$ can be measured in several ways, and a common choice
is the von Neumann entropy $S(\rho_A)$ of the reduced 
density operator, $\rho_A = \text{Tr}_A [|\Psi\ra\la\Psi|]$,
which varies from $0$ (for separable states) to $\log d$
(for maximally entangled, Bell-like states).
However, for this work it is convenient to use the linear
entanglement entropy, $S_{\rm lin}(\ket{\Psi})=(1-\gamma)$,
where $\gamma$ denotes the purity of the partial trace,
\begin{equation}
    \gamma(\ket{\Psi})=\text{Tr}\,\rho_A^2\;.
\end{equation} 
Hence, the smaller the purity the larger the entanglement, 
as the extremal values of purity of the reduced state
are $1$ for separable states and $1/d$ for maximally entangled states with $d = \min(d_A, d_B)$. 

Notably, this measure of entanglement is quadratic in the state. This entails that its average value over the entire Hilbert space $\mathcal{H}$ with respect to the Haar measure, given by~\cite{Lubkin1978}
\begin{equation}
    \label{average purity}
    \ev{\Tr\rho_A^2}_{\mathcal{H}} = \ev{\Tr\rho_B^2}_{\mathcal{H}} = \frac{d_A+d_B}{d_A d_B + 1},
\end{equation}
can be computed as its arithmetic mean over a 2-design, like a SIC or a full set of $(d+1)$ MUBs. The average purity of these constellations is therefore a known, fixed value. In particular, if such constellations are constructed as entanglement-preserving orbits of some fiducial vectors, it is then straightforward to see that the latter must possess this amount of entanglement. 

However, in what follows, we construct $d$ (isoentangled) {\sl mutually unbiased Hadamard matrices} \cite{DURT2010} 
complemented by the computational basis. Thus, the average entanglement in the states comprising these $d$ complex Hadamard matrices reads,
\begin{equation} \label{eq:isoMUHs_ent}
\begin{aligned}
    \ev{\Tr\rho_A^2}_{\text{MUHM}} &= \frac{(d+1)\ev{\Tr\rho_A^2}_{\mathcal{H}} -1}{d}\\ &= \frac{d_A + d_B -1}{ d_A d_B}. 
\end{aligned}
\end{equation}

\section{Extrema of local magick}\label{sec:extrema}
\label{Extrema}

In this section, we study 
a notion analogous to the standard magic
recalled in Definition \ref{m_def}
and based on the global 
Weyl-Heisenberg
group \cite{Leone2022stabREntropy, Dai2022,Feng2022,Feng2024}.
Introduced quantity, related 
to the product WH  group,
will be called \textit{magick}.
 We present some general properties of this measure of non-stabilizerness
 and connect its extrema over all pure quantum states and over equiangular states with SICs and MUBs generated by product WH groups respectively.
 
\begin{defn}[Magick]
For any quantum state $\rho$ of a composite system $\mathcal{H}_d = \bigotimes_i \mathcal{H}_{d_i}$, a quantifier of its magick is given by
\begin{equation}
M(\rho) = \sum_{\vb{k},\vb{l} = 0}^{d-1} \abs{\mathrm{Tr}[\vb{W}_{\vb{kl}}\, \rho]}.
\end{equation}
\end{defn}

The magick measure exhibits several useful properties of its standard counterpart presented in Definition \ref{m_def}. 

\begin{restatable}{lem}{mprop}
Let $\rho \in \Omega_d$ be any quantum state on the product of local Hilbert spaces $\mathcal{H}_{d_i}$, with respect to which we calculate magick $M(\rho)$. Then the following statements are true:
\begin{enumerate}
\item $M(\rho)$ is invariant under product of Clifford operations $\vb{V} =\bigotimes_i V_i$, that is, $M(\vb{V} \rho \vb{V}^\dagger) = M(\rho)$.
\item $M(\rho)$ is convex.
\item $M(\rho) \geq 1$ and the equality is achieved only for the maximally mixed state $\rho^* = \id_d/d$.
\item For any pure state $|\psi\ra$, $M(|\psi\ra) \geq d$ and the equality is achieved only for the product of stabilizer states.
\end{enumerate}
\end{restatable}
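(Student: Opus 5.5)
We take the four items in order; all of them follow from the orthogonality of the product Weyl-Heisenberg operators together with elementary norm inequalities. The basic tool is that the $d^2$ operators $\vb{W}_{\vb{kl}}$ (with $d=\prod_i d_i$) form an orthogonal operator basis, $\Tr[\vb{W}_{\vb{kl}}^\dagger \vb{W}_{\vb{k'l'}}] = d\,\delta_{\vb{kk'}}\delta_{\vb{ll'}}$. This holds because each local family $\{W_{k_il_i}\}$ is orthogonal in $\mathcal{B}(\mathcal{H}_{d_i})$ and the trace factorizes over tensor products. Consequently $\rho = \frac1d\sum_{\vb{k},\vb{l}} \Tr[\vb{W}_{\vb{kl}}^\dagger\rho]\,\vb{W}_{\vb{kl}}$, and the Parseval identity reads
\begin{equation*}
\sum_{\vb{k},\vb{l}}\abs{\Tr[\vb{W}_{\vb{kl}}\rho]}^2 = d\,\Tr\rho^2 .
\end{equation*}

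For item 1, take $\vb{V}=\bigotimes_i V_i$ with each $V_i$ a local Clifford. Then $\vb{V}^\dagger \vb{W}_{\vb{kl}}\vb{V} = \bigotimes_i V_i^\dagger W_{k_il_i}V_i$. By definition of the Clifford group each factor equals a WH operator up to a phase, and the induced map on each index pair $(k_i,l_i)$ is a bijection. Hence $\abs{\Tr[\vb{W}_{\vb{kl}}\vb{V}\rho\vb{V}^\dagger]}$ is simply a permutation of the terms of $M(\rho)$, so $M$ is unchanged. For item 2, convexity is immediate: $\rho\mapsto\abs{\Tr[\vb{W}_{\vb{kl}}\rho]}$ is the absolute value of a linear functional, hence convex, and a sum of convex functions is convex.

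For item 3, the $\vb{k}=\vb{l}=0$ term equals $\Tr\rho=1$ and all other terms are nonnegative, so $M(\rho)\ge1$. Equality forces $\Tr[\vb{W}_{\vb{kl}}\rho]=0$ for every nontrivial index. By the expansion above this gives $\rho=\id_d/d$, and conversely the maximally mixed state satisfies $\Tr \vb{W}_{\vb{kl}}=0$ for every nontrivial index.

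Item 4 is the main step. Every coefficient obeys $\abs{\Tr[\vb{W}_{\vb{kl}}\rho]}\le1$, since $\vb{W}_{\vb{kl}}$ is unitary. Therefore $\sum_{\vb{k},\vb{l}} \abs{c_{\vb{kl}}} \ge \sum_{\vb{k},\vb{l}} \abs{c_{\vb{kl}}}^2 = d\Tr\rho^2 = d$ for a pure state, with equality iff every $\abs{c_{\vb{kl}}}\in\{0,1\}$. In that case exactly $d$ of the $\vb{W}_{\vb{kl}}$ have the pure state $\ket\psi$ as an eigenvector, up to phases.

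The obstacle is showing that these $d$ operators force $\ket\psi$ to be a product of local stabilizer states. The set $S=\{(\vb{k},\vb{l}) : \abs{c_{\vb{kl}}}=1\}$ is closed under products, since common eigenvectors multiply, so after dropping phases it is a subgroup of $\prod_i \fZ_{d_i}^2$ of order $d$. Any two elements of $S$ must commute, since they share the eigenvector $\ket\psi$, so $S$ is isotropic with respect to the symplectic commutation form. Being of order $d$, it is a maximal abelian (Lagrangian) subgroup, and $\ket\psi$ is its stabilizer state.

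We then need $S$ to split as $\prod_i S_i$ with each $S_i$ Lagrangian in $\fZ_{d_i}^2$. The plan is to use the fact that the partial trace of $\rho$ over all parties except $i$ has product-WH coefficients equal to the $c_{\vb{kl}}$ supported on site $i$ alone. If $\rho_i$ is not pure, Parseval applied locally gives fewer than $d_i$ unit-modulus local coefficients. I would then count: the subgroup $S\cap(\text{site-}i\text{ operators})$ has order $\le d_i$, and, attempting this by induction on $n$, show that equality $|S|=d$ forces each local intersection to have order exactly $d_i$. Then $S=\prod_i(S\cap \text{site } i)$, and each $\rho_i$ is a pure local stabilizer.

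If that counting fails, the fallback is a direct argument. A shared eigenvector $\ket\psi$ of $d$ product operators that is entangled across some cut would give reduced purity below $1$. I would try to contradict this using the $\{0,1\}$-modulus structure, via the identity $\Tr\rho_A^2 = \frac{1}{d_A}\sum_{\text{supp}\subseteq A}\abs{c_{\vb{kl}}}^2$. Conversely, a product of local stabilizers has $d_i$ unit-modulus coefficients on each site and zero elsewhere, giving $M=\prod_i d_i=d$.
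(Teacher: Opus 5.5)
\textbf{Items 1--3 and the bound.} Your proofs of items 1--3, and of the bound $M(\ket{\psi})\ge d$ in item 4, are correct and essentially coincide with the paper's argument: orthogonality of the product WH basis, Parseval, and $\abs{c_{\vb{kl}}}\ge\abs{c_{\vb{kl}}}^2$ because $\abs{c_{\vb{kl}}}\le 1$. Your reduction of the equality case is also sound, and goes beyond what the paper writes down. You show that the $d$ unit-modulus indices form a Lagrangian subgroup $S$ of $\prod_i\fZ_{d_i}^2$, and that $\ket{\psi}$ is the joint eigenvector of the corresponding operators.

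\textbf{The gap.} The problem is the step you flag next: the splitting $S=\prod_i S_i$ cannot be established, because it is false. Take two qubits and $\ket{\Phi^+}=(\ket{00}+\ket{11})/\sqrt{2}$. Then $\op{\Phi^+}=\frac14\bigl(\id\otimes\id+X\otimes X+Z\otimes Z+XZ\otimes XZ\bigr)$. So the only nonzero coefficients are the four with $\vb{k}=(k,k)$, $\vb{l}=(l,l)$, $k,l\in\{0,1\}$, each of unit modulus, and $M(\ket{\Phi^+})=4=d$.

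Here $S$ is a diagonal copy of $\fZ_2^2$. Its intersection with the operators supported on either single site is trivial, so your counting induction fails at its first step. The fallback cannot produce a contradiction either: $\Tr\rho_A^2=\frac12$ agrees exactly with your identity, since only the identity term is supported on $A$. The same happens for $\frac{1}{\sqrt{p}}\sum_j\ket{jj}$, which is stabilized by $X\otimes X$ and $Z\otimes Z^{-1}$, and for every entangled stabilizer state, e.g.\ graph states.

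\textbf{What is actually true.} The equality clause of item 4, read as ``tensor product of local stabilizer states'', is false. The paper's own proof has the same hole: it simply asserts in one line that, because the $\vb{W}_{\vb{kl}}$ are products of WH elements, $\ket{\psi}$ must be a product of stabilizer states.

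The correct statement is the one your argument already proves: $M(\ket{\psi})=d$ if and only if $\ket{\psi}$ is a stabilizer state of the product WH group. That is, it is the joint eigenvector of $\{\vb{W}_{\vb{kl}}:(\vb{k},\vb{l})\in S\}$ for some Lagrangian $S$. The converse holds because such a state satisfies $\op{\psi}=\frac1d\sum_{(\vb{k},\vb{l})\in S}\chi_{\vb{kl}}\vb{W}_{\vb{kl}}$ with $\abs{\chi_{\vb{kl}}}=1$. This gives exactly $d$ unit-modulus coefficients and zero for all others.

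You should stop at the Lagrangian step and state the equality case in that form, explicitly noting that it includes entangled states, rather than attempt the splitting.
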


All the properties listed above are proved in Appendix~\ref{app:magick_properties}.

To provide motivation for the above definition, we first have to quote measures quantifying the distance between
an analyzed constellation of vectors and a given 
SIC or set of MUBs \cite{Feng2024}.

\begin{defn}
Let $\Phi = \{|\psi_\alpha\ra\}_{\alpha = 1}^{d^2}$ be a set of $d^2$ pure states in $\mathcal{H}_d$. Then a measure of how close $\Phi$ is to a SIC is
\begin{equation}
P_{\text{SIC}} = \sum_{\alpha < \beta}\left( |\la \psi_\alpha|\psi_\beta\ra| - \sqrt{\frac{1}{d+1}}\right)^2.
\end{equation}
\end{defn}

\begin{defn}
Let $\mathcal{B} =  \{|\psi_{jk}\ra\,,\, k \in \fZ_d\}_{j = 0}^d$ be $d+1$ orthonormal bases in $\mathcal{H}_d$. Then a measure of how close $\mathcal{B}$ is to a set of MUBs is
\begin{equation}
P_{\text{MUB}} = \sum_{j < j’}\sum_{k,k’} \left(|\la \psi_{jk}|\psi_{j’k’}\ra| - \frac{1}{\sqrt{d}} \right)^2.
\end{equation}
\end{defn}

Either quantity is equal to $0$ if and only if the corresponding set of vectors defines a SIC or a full set of MUBs, respectively.
It turns out that while considering a set of vectors constituting an orbit of a single fiducial state under the product Weyl-Heisenberg group, both of these divergences can be expressed using the introduced notion of magick.

\begin{restatable}{lem}{mSIC}
Let $\mathcal{O}{|\psi\ra} = \{|\psi_{\vb{k}\vb{l}}\ra\} = \{\vb{W}_{\vb{k}\vb{l}}|\psi\ra\}$ be the orbit of a pure state $|\psi\ra$ under the product of the Weyl-Heisenberg groups on the local subsystems. Then its divergence from a SIC is given by
\begin{equation}
P_{\text{SIC}}(\mathcal{O}{|\psi\ra}) = d^3 - d^2 + \frac{d^2}{\sqrt{d+1}}(1 - M(|\psi\ra)),
\end{equation}
where $d$ is the dimension of the global Hilbert space. Therefore, $\mathcal{O}{|\psi\ra}$ generates a SIC if and only if the above expression is zero, thus
\begin{equation}
M(|\psi\ra) = 1 + (d-1)\sqrt{d+1}.
\end{equation}
\end{restatable}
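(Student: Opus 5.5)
The plan is to expand the square in $P_{\text{SIC}}$ and express each of the three resulting sums through the expectation values $\Tr[\vb{W}_{\vb{kl}}\rho]$ with $\rho=\ket{\psi}\bra{\psi}$. Write $c=1/\sqrt{d+1}$ and label the orbit by $\alpha=(\vb{k},\vb{l})$. The orbit contains $d^2$ elements, so there are $N=d^2(d^2-1)/2$ unordered pairs, and
\begin{equation*}
P_{\text{SIC}}=\sum_{\alpha<\beta}\abs{\ip{\psi_\alpha}{\psi_\beta}}^2-2c\sum_{\alpha<\beta}\abs{\ip{\psi_\alpha}{\psi_\beta}}+Nc^2 .
\end{equation*}

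The key structural step is the reduction of the overlaps to differences of indices. Locally, $W_{kl}^\dagger W_{k'l'}$ equals a phase times $W_{k'-k,\,l'-l}$, with indices taken mod $d_i$. Taking the tensor product, $\ip{\psi_{\vb{kl}}}{\psi_{\vb{k'l'}}}$ equals a unimodular phase times $\Tr[\vb{W}_{\vb{k'-k},\vb{l'-l}}\rho]$. For fixed $\alpha$, as $\beta$ ranges over the other $d^2-1$ orbit elements, the difference ranges exactly once over all nonzero index pairs.

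This gives the two sums directly. Summing over ordered pairs with $\alpha\neq\beta$ yields $d^2\,(M(\ket{\psi})-1)$, since the identity term contributes $1$ to $M$. Hence
\begin{equation*}
\sum_{\alpha<\beta}\abs{\ip{\psi_\alpha}{\psi_\beta}}=\tfrac{d^2}{2}\,(M-1).
\end{equation*}
For the squares I use that the $d^2$ operators $\vb{W}_{\vb{kl}}$ form an orthogonal operator basis with $\Tr[\vb{W}_{\vb{kl}}^\dagger\vb{W}_{\vb{k'l'}}]=d\,\delta$. This is inherited factorwise from the single-party Weyl basis. Consequently $\sum_{\vb{kl}}\abs{\Tr[\vb{W}_{\vb{kl}}\rho]}^2=d\Tr\rho^2=d$ for a pure state, and removing the identity term gives $\sum_{\alpha<\beta}\abs{\ip{\psi_\alpha}{\psi_\beta}}^2=\tfrac{d^2}{2}(d-1)$.

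Assembling the pieces, $Nc^2=\frac{d^2(d^2-1)}{2(d+1)}=\frac{d^2(d-1)}{2}$. Therefore
\begin{equation*}
P_{\text{SIC}}=\tfrac{d^2(d-1)}{2}+\tfrac{d^2(d-1)}{2}-c\,d^2(M-1)=d^3-d^2+\tfrac{d^2}{\sqrt{d+1}}(1-M),
\end{equation*}
which is the claimed formula. Since $P_{\text{SIC}}=0$ exactly for SICs, setting the expression to zero and solving gives $M=1+(d-1)\sqrt{d+1}$.

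The only delicate point is the bookkeeping in the reduction step. One must check that the phases $\tau$ do not affect moduli, that the map $\beta\mapsto\beta-\alpha$ is a bijection onto the nonzero index pairs, and that the orbit really has $d^2$ distinct elements. The last point holds up to phase whenever the overlaps are not of unit modulus, and it is automatic in the SIC case.
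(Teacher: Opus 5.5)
Your proposal is correct and follows essentially the same route as the paper. You expand the square, use the group law to reduce each overlap to $\abs{\Tr[\vb{W}_{\vb{k}''\vb{l}''}\rho]}$, evaluate the quadratic sum via the orthogonality of the product Weyl operators, and evaluate the linear sum via $M(\ket{\psi})$. The only difference is bookkeeping: you sum directly over unordered distinct pairs, whereas the paper sums over all ordered pairs (including $\alpha=\beta$) and then subtracts the diagonal contribution.
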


\begin{restatable}{lem}{mMUB}
Let $\mathcal{O}{|\psi\ra} = \{|\psi_{\vb{k}\vb{l}}\ra\} = \{\vb{W}_{\vb{k}\vb{l}}|\psi\ra\}$ be the orbit of a pure state $|\psi\ra$ under the product of the Weyl-Heisenberg groups on the local subsystems, which can be partitioned into $d$ orthonormal bases. Then its divergence from a set of $d$ mutually unbiased bases reads,
\begin{equation}
P_{\text{MUB}}(\mathcal{O}{|\psi\ra}) = d^3 - d^2 + d\sqrt{d}(1 - M(|\psi\ra)),
\end{equation}
where $d$ is the dimension of the global Hilbert space. Therefore, $\mathcal{O}{|\psi\ra}$ generates MUBs if and only if the above expression is zero, thus
\begin{equation}
M(|\psi\ra) = 1 + (d-1)\sqrt{d}.
\end{equation}
\end{restatable}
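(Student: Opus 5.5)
We prove Lemma~\ref{} about $P_{\text{MUB}}$ by expanding the square and reducing every overlap to a single expectation value $\la\psi|\vb{W}_{\vb{kl}}|\psi\ra$, as in the SIC lemma. The orbit has $d^2$ states and is assumed to split into $d$ orthonormal bases $\{B_j\}$, each of size $d$. Summing over pairs of distinct bases $j<j'$ and all $k,k'$ gives $\binom{d}{2}d^2$ terms. Expanding $(|\la\psi_{jk}|\psi_{j'k'}\ra|-1/\sqrt d)^2$ produces three pieces:
\begin{equation}
\sum |\la\cdot|\cdot\ra|^2 \;-\; \frac{2}{\sqrt d}\sum|\la\cdot|\cdot\ra| \;+\; \frac{1}{d}\cdot\frac{d(d-1)}{2}d^2 .
\end{equation}

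\emph{Quadratic piece.} This is fixed by orthonormality. Each $\vb{W}_{\vb{kl}}|\psi\ra$ has
\begin{equation}
\sum_{\text{all orbit states}}|\la\phi|\psi_\alpha\ra|^2=\sum_{j}\sum_k|\la\phi|\psi_{jk}\ra|^2=d,
\end{equation}
since each basis contributes $\|\phi\|^2=1$. Within its own basis the state contributes $1$ (itself), so the cross-basis sum per state is $d-1$. Summing over all $d^2$ states and halving for unordered pairs gives $d^2(d-1)/2$. The constant term also equals $d^2(d-1)/2$.

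\emph{Linear piece.} Here the group structure enters. Because the phases $\tau$ only contribute moduli, we have
\begin{equation}
|\la\psi_{\vb{kl}}|\psi_{\vb{k'l'}}\ra|=|\la\psi|\vb{W}_{\vb{k}-\vb{k'},\vb{l}-\vb{l'}}|\psi\ra|,
\end{equation}
using $\vb{W}^\dagger_{\vb{kl}}\vb{W}_{\vb{k'l'}}\propto \vb{W}_{\vb{k'}-\vb{k},\vb{l'}-\vb{l}}$. So the sum over all ordered pairs of distinct orbit elements is $d^2\,(M(|\psi\ra)-1)$: for each fixed element, the differences run over all nonidentity group labels, and the identity term $|\la\psi|\psi\ra|=1$ is excluded.

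Within a basis the overlaps vanish, so same-basis pairs contribute nothing to this sum. The distinct pairs are therefore exactly the cross-basis pairs, and the unordered cross-basis sum is $d^2(M-1)/2$. The linear piece is then
\begin{equation}
-\frac{2}{\sqrt d}\cdot\frac{d^2(M-1)}{2}=d\sqrt d\,(1-M).
\end{equation}

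Adding the three pieces gives
\begin{equation}
P_{\text{MUB}}=\frac{d^2(d-1)}{2}+\frac{d^2(d-1)}{2}+d\sqrt d\,(1-M)=d^3-d^2+d\sqrt d\,(1-M),
\end{equation}
as claimed. The equivalence follows because $P_{\text{MUB}}=0$ exactly characterises mutual unbiasedness of these $d$ bases, and this holds iff $M=1+(d-1)\sqrt d$.

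The main point needing care is justifying that the quadratic piece depends only on orthonormality, and that the intra-basis overlaps are exactly zero. Both follow from the hypothesis that the orbit partitions into orthonormal bases. One further check is that the orbit has $d^2$ distinct elements up to phase. This is implied by the partition hypothesis, since $d$ bases of $d$ orthonormal vectors give $d^2$ states.
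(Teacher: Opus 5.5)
Your proof is correct and follows essentially the paper's route: expand the square, use orthonormality inside each basis, and use group covariance to turn the sum of all overlaps into $d^2 M(|\psi\ra)$. The only real difference is that you obtain the quadratic piece from completeness of each of the $d$ bases rather than from the operator-basis identity \eqref{eq:squared_overlaps}, and both give the same total $d^3$ over ordered pairs. Your closing remark that the partition forces $d^2$ distinct states is false (for $|\psi\ra=|\vb{0}\ra$ the orbit splits into $d$ copies of the computational basis), but it is harmless, since your computation runs over group labels and never uses distinctness.
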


\begin{proof}
Proofs of both Lemmas 2 and 3 are presented in Appendix~\ref{app:magick_saturation}.
\end{proof}

We next demonstrate that the amount of magick necessary for a pure state $|\psi\ra$ to generate a SIC as an orbit under products of Weyl-Heisenberg groups is maximal.

\begin{lem}\label{lem:max_magic}
The maximal value of magick is achieved for the fiducial vector of a SIC measurement,
\begin{equation*}
M(\rho)\leq M(|f_{\text{SIC}}\ra) = 1 + (d-1)\sqrt{d+1}
\end{equation*}
for all quantum states $\rho \in \Omega_d$.
\end{lem}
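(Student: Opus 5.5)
Since $M$ is convex (Lemma 1, item 2), its maximum over the convex set $\Omega_d$ is attained at an extreme point, i.e.\ at a pure state $\rho=|\psi\ra\la\psi|$. It therefore suffices to bound $M(|\psi\ra)$ for pure states.

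For a pure state, separate the identity term $\vb{k}=\vb{l}=0$, which contributes $|\Tr\rho|=1$, and bound the remaining $d^2-1$ terms by Cauchy--Schwarz:
\begin{equation*}
M(|\psi\ra) = 1 + \sum_{(\vb{k},\vb{l})\neq(0,0)} \abs{\la\psi|\vb{W}_{\vb{kl}}|\psi\ra}
\leq 1 + \sqrt{(d^2-1)\sum_{(\vb{k},\vb{l})\neq(0,0)} \abs{\la\psi|\vb{W}_{\vb{kl}}|\psi\ra}^2 }.
\end{equation*}
The remaining sum of squared moduli is computed exactly. The operators $\vb{W}_{\vb{kl}}$, being tensor products of local Weyl--Heisenberg operators, form an orthogonal operator basis of $\mathcal{B}(\mathcal{H}_d)$: $\Tr[\vb{W}_{\vb{kl}}^\dagger \vb{W}_{\vb{k'l'}}]=d\,\delta_{\vb{kk'}}\delta_{\vb{ll'}}$, which follows factorwise from the single-system relation $\Tr[W_{kl}^\dagger W_{k'l'}]=d_i\delta\delta$. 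Expanding $\rho=\frac1d\sum \Tr[\vb{W}_{\vb{kl}}^\dagger\rho]\,\vb{W}_{\vb{kl}}$ and taking $\Tr\rho^2=1$ gives the Parseval identity
\begin{equation*}
\sum_{\vb{k},\vb{l}}\abs{\Tr[\vb{W}_{\vb{kl}}\rho]}^2 = d ,
\end{equation*}
so the sum over the non-identity terms equals $d-1$. Hence
\begin{equation*}
M(|\psi\ra)\leq 1+\sqrt{(d^2-1)(d-1)} = 1+(d-1)\sqrt{d+1}.
\end{equation*}

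Finally, the bound coincides with the value $M(|f_{\text{SIC}}\ra)$ from Lemma 2, so it is attained by a SIC fiducial whenever one exists. Equality in Cauchy--Schwarz holds exactly when all $|\la\psi|\vb{W}_{\vb{kl}}|\psi\ra|$ with $(\vb{k},\vb{l})\neq(0,0)$ are equal to $1/\sqrt{d+1}$, which is precisely the SIC condition on the orbit. This is consistent with Lemma 2, which gives $P_{\text{SIC}}=0$ at that value.

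The main delicate point is the Parseval step. The orthogonality relation must hold for the product group with arbitrary local dimensions $d_i$; it does, because the trace factorizes over the tensor product and the phases $\tau$ have modulus one. Beyond that, the argument is routine; one should only note that the statement presupposes the existence of $|f_{\text{SIC}}\ra$, and otherwise reads as the upper bound alone.
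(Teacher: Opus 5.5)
Your proof is correct and follows the paper's argument essentially step for step: reduce to pure states by convexity, split off the identity term, apply Cauchy--Schwarz to the remaining $d^2-1$ terms, and use $\sum_{\vb{k},\vb{l}}|\Tr[\vb{W}_{\vb{kl}}\rho]|^2=d$ for pure $\rho$. The paper obtains this last identity by vectorization in its computation of the squared overlaps, whereas you derive it from the orthogonal operator-basis expansion, and your characterization of the equality case (all nontrivial overlaps equal to $1/\sqrt{d+1}$) is a correct addition that the paper leaves implicit in Lemma 2.
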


\begin{proof}
By the convexity of magick, it is sufficient to consider pure states. Let us take an arbitrary pure state and bound its magick by
\begin{equation*}
\begin{aligned}
M(|\psi\ra) & = \sum_{\vb{k},\vb{l}} \Tr[\vb{W}_{\vb{k}\vb{l}} |\psi\ra\la\psi|] \\
& = 1 + \sqrt{\Big(\sum_{(\vb{k}, \vb{l})\neq (\vb{0},\vb{0})} \!\!\!\!\!\!\Tr[\vb{W}_{\vb{k}\vb{l}} \op{\psi}] \Big)^2}  \\
& \leq 1 + \sqrt{(d^2-1)\!\!\!\!\sum_{(\vb{k}, \vb{l})\neq (\vb{0},\vb{0})} \!\!\!\!\!\Tr[\vb{W}_{\vb{k}\vb{l}} \op{\psi}]^2},
\end{aligned}
\end{equation*}
where we extracted the identity in the zeroth term and applied the Cauchy–Schwarz inequality for the sum of the remaining terms. The sum of the squared overlaps was already calculated in \eqref{eq:squared_overlaps}, but this time we need to extract from it the zeroth term $\Tr[\vb{W}_{\vb{0}\vb{0}}|\psi\ra\la\psi|]^2 = \Tr[|\psi\ra\la\psi|]^2 = 1$. Therefore, we obtain
\begin{align*}
M(|\psi\ra) & \leq 1 + \sqrt{(d-1)(d^2 - 1)} \\
& = 1 + (d-1)\sqrt{d+1}.
\end{align*}
\end{proof}

An example of a SIC with fiducial vectors maximizing the global magic are the so-called Hoggar lines \cite{Hoggar1998}, given by the fiducial vector
\begin{equation*}
|\psi\ra = \frac{1}{\sqrt{6}}(1+\mathrm{i}, 0, -1, 1, -\mathrm{i}, -1, 0, 0).
\end{equation*}

To present similar result for MUBs, we first show that, in the case of an equimodular state, the orbit under the product WH group consists of a set of bases.

\begin{lem}\label{lem:basis_gen}
Let $|\theta\ra$ be an equimodular vector $|\theta\ra = \frac{1}{\sqrt{d}}\sum_{\vb{j}} e^{\mathrm{i} \theta_{\vb{j}}} |\vb{j}\ra$ with arbitrary phases $\theta_{\vb{j}} \in [0, 2 \pi)$. Then its orbit under the product Weyl-Heisenberg group can be partitioned into $d$ orthonormal bases $\{|\theta_{\vb{k}\vb{l}}\ra := \vb{W}_{\vb{k}\vb{l}}|\theta\ra\}_{\vb{l} = 0}^{d-1}$ for each $\vb{k}$.
\end{lem}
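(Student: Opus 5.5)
The plan is to fix $\vb{k}$ and compute directly the inner products $\la \theta_{\vb{k}\vb{l}} | \theta_{\vb{k}\vb{l}'}\ra$ for two elements of the orbit that share the same shift index $\vb{k}$. Since $\vb{W}_{\vb{k}\vb{l}}$ is unitary, each $|\theta_{\vb{k}\vb{l}}\ra$ is automatically normalized. It therefore suffices to show two things: vectors with $\vb{l}\neq\vb{l}'$ are orthogonal, and there are exactly $d$ of them for each $\vb{k}$. Together these give an orthonormal basis of $\mathcal{H}_d$ for every $\vb{k}$. Since the pairs $(\vb{k},\vb{l})$ range over $d\cdot d$ values, the orbit, viewed as the indexed family, splits into $d$ such bases.

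The computation goes as follows. We write $\vb{W}_{\vb{k}\vb{l}} = c_{\vb{k}\vb{l}}\, \vb{X}^{\vb{k}}\vb{Z}^{\vb{l}}$, where $c_{\vb{k}\vb{l}}=\prod_i\tau^{k_il_i}$ is a unimodular phase and $\vb{X}^{\vb{k}}=\bigotimes_i X^{k_i}$, $\vb{Z}^{\vb{l}}=\bigotimes_i Z^{l_i}$. The shifts cancel because $(\vb{X}^{\vb{k}})^\dagger\vb{X}^{\vb{k}}=\id$, so
\begin{equation*}
\la \theta_{\vb{k}\vb{l}}|\theta_{\vb{k}\vb{l}'}\ra = \bar c_{\vb{k}\vb{l}}c_{\vb{k}\vb{l}'}\,\la\theta|\vb{Z}^{\vb{l}'-\vb{l}}|\theta\ra .
\end{equation*}
Since $\vb{Z}^{\vb{m}}$ is diagonal in the product computational basis with eigenvalue $\prod_i\omega^{m_i j_i}$ on $|\vb{j}\ra$, equimodularity gives
\begin{equation*}
\la\theta|\vb{Z}^{\vb{m}}|\theta\ra=\frac{1}{d}\sum_{\vb{j}}\prod_i\omega_{d_i}^{m_ij_i}=\frac1d\prod_i\Big(\sum_{j_i=0}^{d_i-1}\omega_{d_i}^{m_ij_i}\Big).
\end{equation*}
Here the phases $e^{\mathrm{i}\theta_{\vb{j}}}$ disappear because each enters as $|e^{\mathrm{i}\theta_{\vb{j}}}|^2=1$. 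This is the only place where equimodularity is used. When $\vb{m}\neq\vb{0}$ (mod the local dimensions), at least one factor is a full sum of nontrivial roots of unity and hence vanishes. So distinct $\vb{l}$ give orthogonal vectors.

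The main obstacle is purely bookkeeping: keeping the local dimensions $d_i$ (equal to $p$ in the paper's setting) distinct from the global $d=\prod_i d_i$, and noting that $\vb{l}$ ranges over $\prod_i\fZ_{d_i}$, which has exactly $d$ elements. Once this is in place, $d$ pairwise orthogonal unit vectors in $\mathcal{H}_d$ form an orthonormal basis, which completes the argument.
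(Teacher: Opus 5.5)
Your proof is correct and takes essentially the same approach as the paper. The paper writes $|\theta\ra = U_\theta|+\ra$ with a diagonal unitary $U_\theta$ and commutes it past $\vb{Z}^{\vb{l}'-\vb{l}}$, reducing the overlap to $\la +|\vb{Z}^{\vb{l}'-\vb{l}}|+\ra=\prod_i \frac{1}{d_i}\sum_{j_i}\omega_{d_i}^{j_i(l_i'-l_i)}=\delta_{\vb{l}\vb{l}'}$; this is your observation that the phases enter only as $|e^{\mathrm{i}\theta_{\vb{j}}}|^2=1$, followed by the vanishing of the factorized root-of-unity sums.
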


\begin{proof}
Let us rewrite the initial state using the diagonal gate $U_\theta = \sum_{\vb{j} = 0}^{d-1} e^{\mathrm{i} \theta_{\vb{j}}}|\vb{j}\ra\la\vb{j}|$ as  $|\theta\ra = U_\theta |+\ra$. Then, one can check the orthogonality by appropriately commuting diagonal unitaries
\begin{equation*}
\begin{aligned}
|\la\theta_{\vb{k}\vb{l}}|\theta_{\vb{k}\vb{l}'}\ra| &= |\la +|U_\theta^\dagger Z^{-\vb{l}} X^{-\vb{k}} X^{\vb{k}} Z^{\vb{l}'} U_\theta |+\ra| \\
& = |\la +|U_\theta^\dagger Z^{\vb{l}'-\vb{l}} U_\theta |+\ra| \\
& = |\la +| Z^{\vb{l}'-\vb{l}}  |+\ra| \\
& = \prod_i\left(\frac{1}{d_i} \sum_{j_i}^{d_i} \omega_{d_i}^{j_i(l_i' - l_i)} \right) \\
& =\delta_{\vb{l}\vb{l}'},
\end{aligned}
\end{equation*}
where we used the fact that the $|+\ra = \frac{1}{\sqrt{d}}\sum_{\vb{i}} |\vb{i}\ra$ is a product of local $|+\ra_{d_i}$ states on all subsystems, and the operators $Z^{\vb{l}'-\vb{l}}$ are also in product form.
\end{proof}

After having established a link between orbits of equimodular vectors and orthonormal bases, we 
will show below that the degree of unbiasedness of the bases
is also related to the maximization of magick.

\begin{lem}\label{lem:eq_mod}
The maximal value of magick over equimodular vectors is achieved for the fiducial state $|f_{\text{MUB}}\ra\in \mathcal{H}_d$ which generates $d^2$ states forming $d$ MUBs, thus
\begin{equation*}
M(|\theta\ra)\leq M(|f_{\text{MUB}}\ra) = 1 + (d-1)\sqrt{d}
\end{equation*}
for all equimodular quantum states.
\end{lem}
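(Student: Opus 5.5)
The plan mirrors the proof of Lemma~\ref{lem:max_magic}: split off the trivially fixed terms and then apply Cauchy--Schwarz to what remains. The new ingredient for an equimodular vector $|\theta\ra$ is that many Weyl--Heisenberg expectation values vanish identically or are fixed.

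\textbf{Step 1: pure diagonal operators.} Consider first the operators with $\vb{k}=\vb{0}$, i.e.\ $\vb{W}_{\vb{0}\vb{l}}\propto Z^{\vb{l}}$. As in the proof of Lemma~\ref{lem:basis_gen}, we have
\begin{equation*}
\la\theta|Z^{\vb{l}}|\theta\ra=\la +|Z^{\vb{l}}|+\ra=\delta_{\vb{l}\vb{0}} .
\end{equation*}
So among these $d$ terms, only the identity contributes, and it contributes exactly $1$.

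\textbf{Step 2: Cauchy--Schwarz on the rest.} The remaining terms are indexed by $\vb{k}\neq\vb{0}$, which gives $d(d-1)=d^2-d$ pairs $(\vb{k},\vb{l})$. Cauchy--Schwarz yields
\begin{equation*}
M(|\theta\ra)\le 1+\sqrt{(d^2-d)\sum_{\vb{k}\neq \vb{0},\,\vb{l}}\abs{\Tr[\vb{W}_{\vb{k}\vb{l}}\op{\theta}]}^2 } .
\end{equation*}
For the sum of squares we use the identity already used in Lemma~\ref{lem:max_magic} (eq.~\eqref{eq:squared_overlaps}): for any pure state, $\sum_{\vb{k},\vb{l}}\abs{\Tr[\vb{W}_{\vb{k}\vb{l}}\op{\psi}]}^2=d$. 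This holds because the product Weyl operators form an orthogonal operator basis with $\Tr[\vb{W}^\dagger_{\vb{k}\vb{l}}\vb{W}_{\vb{k}'\vb{l}'}]=d\,\delta$, so the sum equals $d\Tr\rho^2$. Subtracting the $\vb{k}=\vb{0}$ contributions, which by Step 1 sum to $1$, leaves $d-1$. Hence
\begin{equation*}
M(|\theta\ra)\le 1+\sqrt{(d^2-d)(d-1)}=1+(d-1)\sqrt{d}.
\end{equation*}

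\textbf{Step 3: attainment.} If $|f_{\text{MUB}}\ra$ is an equimodular vector whose orbit forms $d$ MUBs, then Lemma~\ref{lem:mMUB}, via $P_{\text{MUB}}=0$, gives $M(|f_{\text{MUB}}\ra)=1+(d-1)\sqrt d$. Its orbit is a union of bases by Lemma~\ref{lem:basis_gen}, so Lemma~\ref{lem:mMUB} applies. The bound from Step 2 is therefore achieved.

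As a consistency check, equality in Cauchy--Schwarz forces all $\abs{\Tr[\vb{W}_{\vb{k}\vb{l}}\op\theta]}=1/\sqrt d$ for $\vb{k}\ne\vb{0}$. This is exactly the unbiasedness condition, because the overlap between bases $\vb{k}$ and $\vb{k}'$ reduces, after the same commutations, to such expectation values. This gives the converse direction if desired.

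The main obstacle I expect is bookkeeping rather than substance. One must make sure that the vanishing of the $Z^{\vb{l}}$ expectations uses the full product structure: the factor $\prod_i$ of local character sums vanishes as soon as one $l_i\neq0$. One must also confirm that the phases $\tau$ do not affect absolute values.
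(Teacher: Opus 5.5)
Your proof is correct and follows the paper's argument. You isolate the identity term, note that the remaining $\vb{k}=\vb{0}$ terms vanish for equimodular $|\theta\ra$, and apply Cauchy--Schwarz over the $d(d-1)$ remaining terms. The paper evaluates $\sum_{\vb{k}\neq\vb{0},\vb{l}}|\la\theta|\vb{W}_{\vb{k}\vb{l}}|\theta\ra|^2=d-1$ by Parseval in each orbit basis $\{\vb{W}_{\vb{k}\vb{l}}|\theta\ra\}_{\vb{l}}$ from Lemma~\ref{lem:basis_gen}, with each $\vb{k}\neq\vb{0}$ contributing exactly $1$. You instead use the global identity $\sum_{\vb{k},\vb{l}}|\Tr[\vb{W}_{\vb{k}\vb{l}}\rho]|^2=d\Tr\rho^2$ and subtract the diagonal block; this is an equivalent minor variation. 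Your explicit attainment step is a welcome addition, since the paper leaves it implicit.
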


\begin{proof}
By the same token as in proof of Lemma~\ref{lem:max_magic} we may express the magick of any equimodular vector $|\theta\ra$ as
\begin{equation*}
\begin{aligned}
M(|\theta\ra) & = \sum_{\vb{k},\vb{l}} \Tr[\vb{W}_{\vb{k}\vb{l}} |\theta\ra\la\theta|] \\
& = 1 + \sqrt{\left(\sum_{\vb{k} = 1, \vb{l}  = 0}^d |\la\theta|\vb{W}_{\vb{k}\vb{l}} |\theta\ra| \right)^2} \\
& \leq 1 + \sqrt{d(d-1)\sum_{\vb{k} = 1, \vb{l}  = 0}^d |\la\theta|\vb{W}_{\vb{k}\vb{l}} |\theta\ra|^2 }.
\end{aligned}
\end{equation*}
However, this time for each $\vb{k}$ the set $\{\vb{W}_{\vb{k}\vb{l}}|\theta\ra\}_{\vb{l} = 0}^d$ forms an orthonormal basis, according to Lemma~\ref{lem:basis_gen}, in which we can expand the state $|\theta\ra$. Furthermore, we omitted the first, “incomplete” basis with $\vb{k} = 0$ since the coefficients over which we sum are equal to zero. It is so because $\vb{W}_{\vb{0}\vb{l}}|\theta\ra$ are vectors complementing $\vb{W}_{\vb{0}\vb{0}}|\theta\ra =|\theta\ra$ to orthonormal basis. Thus, we obtain the desired result.
\end{proof}


The general recipe for a fiducial MUB vector can be understood in the following way: One first generates an equal superposition state $\ket{+} = \frac{1}{\sqrt{d}} \sum_{\vb{j}=1}^d \ket{\vb{j}}$, and then apply an appropriately constructed diagonal $T$ gate, so that $\ket{f_{\text{MUB}}} := T\ket{+}$ maximizes magick over all equimodular states. Single qubit example for $d=2$, presented in Fig. \ref{fig:fiducial}, provides a geometric intuition, extending to higher-dimensional analogues.

\begin{figure}[h]
\begin{center}
\includegraphics[width=1.0\linewidth]{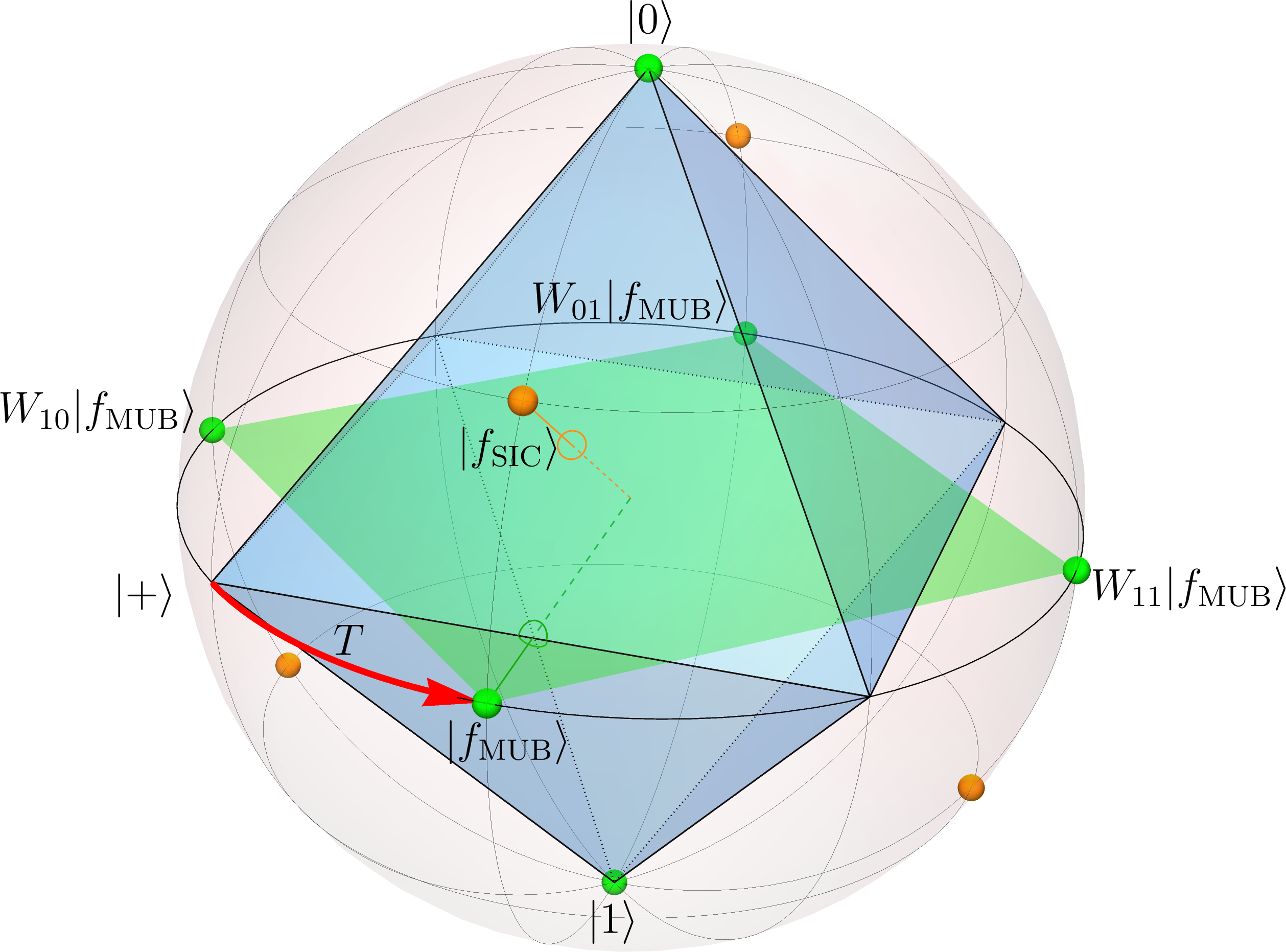} \ \
\end{center}
\caption{
\textbf{WH-covariant MUB and SIC 
in dimension $d = 2$.}
The MUB fiducial vector $|f_{\text{MUB}}\ra$ is generated from the equatorial state $|+\rangle$ with a suitable diagonal $T$ gate (rotating the state about the $Z$ axis), and the remaining equimodular states of mutually unbiased basis are generated by $W_{01} \propto Z,\,W_{10}\propto X,\, W_{11}\propto Y$ (vertices of green square). In order to obtain a full set of MUBs (green spheres) one needs additionally the computational basis, which itself is WH-covariant since $\ket{0} = Z\ket{0}$ and $\ket{1} = X\ket{0} = XZ\ket{0}$.
The SIC fiducial vector $\ket{f_{\text{SIC}}}$ and its images under the WH group (orange spheres) lie ``above'' the centres of octahedron faces. The stabilizer pure states are represented by vertices of an octahedron inscribed in the Bloch ball.
}
\label{fig:fiducial}
\end{figure}

Using similar methods, one can also present the counterpart of Proposition 3 from~\cite{Feng2024}, which would state that the orbit of the product of stabilizer states under the product Weyl-Heisenberg group collapses to one orthonormal basis.

Finally, we present an observation which allows us to extend the presented construction of isoentangled bases beyond equimodular vectors.
Loosely speaking, the idea is based on the fact that the above equimodular construction corresponds to a basis of stabilizer states $|\vb{i}\ra$ which are a joint eigenbasis of $\{\vb{Z}_{\vb{l}}\}_{\vb{l} = 0}^d$. However, one could choose any other stabilizer basis as a computational one. To present the statement rigorously, let us partition each product Weyl-Heisenberg group, with extracted identity, into $d_i+1$ commuting sets
\begin{equation}
\label{eq:wh_strata}
\begin{aligned}
\mathcal{W}_{j}^{(i)} &= \{(W_{1j_i})^k,k = 1,\cdots,d_{i-1}\}, ~ j \in [d_i] \\
\mathcal{W}_{d_i}^{(i)} &= \{W_{0k},k = 1,\cdots,d_{i-1}\}.
\end{aligned}
\end{equation}
A pure quantum state $|\psi\ra$ is equimodular, that is,  $|\psi\ra = |\theta\ra = U_{\theta}|+\ra$, if and only if it has zero overlap with all diagonal elements from the product of Weyl-Heisenberg groups. The implication in one direction is straightforward:
\begin{equation*}
\begin{aligned}
& \Tr[\vb{W}_{\vb{0}\vb{l}} |\theta\ra\la\theta|] = \Tr[U_\theta^\dagger \vb{W}_{\vb{0}\vb{l}} U_\theta |+\ra\la+|] \\ 
& = \Tr[\vb{W}_{\vb{0}\vb{l}}|+\ra\la+|] = 0 ,
\end{aligned}
\end{equation*}
whereas the implication in the opposite direction follows from the fact that any state can be expressed as $|\psi\ra = U_{\psi}|+\ra$ for some unitary $U_\psi$, and the operators $\vb{W}_{\vb{0}\vb{l}}$ span a basis of diagonal matrices. By expanding this observation, we can extend
Lemma~\ref{lem:eq_mod}.

\begin{lem}
Any pure state $|\psi\ra \in \mathcal{H}_d$ satisfying
\begin{equation*}
\mathrm{Tr}[\vb{W}_{\vb{k}\vb{l}} |\psi\ra\la\psi|] = 0 \text{ for all } \vb{W}_{\vb{k}\vb{l}} \in \bigotimes_i  \mathcal{W}_{j_i}^{(i)}
\end{equation*}
for some $\vb{j} = [j_1,\cdots, j_n]$,
can be mapped to an equimodular vector by some product of Clifford operations.
Thus, its measure of magick is bounded by
\begin{equation}
M(|\psi\ra) \leq 1 + (d-1)\sqrt{d},
\end{equation}
and the above inequality is achieved if and only if $|\psi\ra$ is a fiducial vector of MUBs.
\end{lem}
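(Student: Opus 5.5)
The plan is to reduce the statement to Lemma~\ref{lem:eq_mod} by a local Clifford change of basis, and then use the invariance of magick from the first Lemma (property 1).

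\emph{Step 1: a local Clifford mapping each stratum to the diagonal one.} For every subsystem $i$ I would fix a single-qudit Clifford $V_i$ that maps the commuting set $\mathcal{W}_{j_i}^{(i)}$ onto the diagonal set $\mathcal{W}_{d_i}^{(i)} = \{W_{0k}\}$, up to phases. By \eqref{eq:wh_strata}, $\mathcal{W}_{j}^{(i)}$ consists of the powers of a single WH element $W_{1j}$. The Clifford group acts on the labels $(k,l)\in\fZ_{d_i}^2$ through $SL(2,\fZ_{d_i})$, which acts transitively on the cyclic subgroups generated by $(1,j)$ and $(0,1)$. So a symplectic matrix sending $(1,j)\mapsto(0,1)$ exists, for example $\begin{pmatrix} 0 & -1\\ 1 & -j\end{pmatrix}$ acting as $(k,l)\mapsto(-l,\,k-jl)$, which sends $(1,j)\mapsto(-j,0)$. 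I would choose the matrix carefully, or compose with the Fourier Clifford, so that the image of $(1,j)$ is exactly $(0,1)$. A unitary $V_i$ implementing it is standard. Since Cliffords map $W_{kl}$ to $W_{k'l'}$ up to a phase, $V_i (W_{1j})^k V_i^\dagger \propto W_{0k}$. If $j_i=d_i$ already, I take $V_i=\id$.

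\emph{Step 2: the conditions become diagonal ones.} I set $\vb{V}=\bigotimes_i V_i$ and $|\psi'\ra=\vb{V}|\psi\ra$. Then $\Tr[\vb{W}_{\vb{0}\vb{l}}|\psi'\ra\la\psi'|]$ is, up to a phase, $\Tr[\vb{W}'|\psi\ra\la\psi|]$ for some $\vb{W}'$ in the product of the strata $\mathcal{W}^{(i)}_{j_i}$, taking identity factors where $l_i=0$.

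Here lies the main obstacle. As written, the hypothesis quantifies over $\bigotimes_i\mathcal{W}^{(i)}_{j_i}$, which excludes identity factors on some sites. The needed vanishing for all $\vb{l}\neq\vb{0}$ therefore requires interpreting that set as including the local identities, i.e. $\bigotimes_i(\mathcal{W}^{(i)}_{j_i}\cup\{\id\})\setminus\{\id\}$. I would state explicitly that this is the intended reading, matching the equimodular case $j_i=d_i$, where vanishing on all $\vb{W}_{\vb{0}\vb{l}}$ with $\vb{l}\ne\vb{0}$ is required. With that reading, $|\psi'\ra$ has zero overlap with all nontrivial diagonal product WH operators. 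By the equivalence proven just before the statement (these operators span the diagonal matrices, so $|\la \vb{j}|\psi'\ra|^2$ is constant), $|\psi'\ra$ is equimodular.

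\emph{Step 3: the bound and the equality case.} By property 1 of the first Lemma, $M(|\psi\ra)=M(|\psi'\ra)\le 1+(d-1)\sqrt d$ by Lemma~\ref{lem:eq_mod}. Equality holds iff the Cauchy--Schwarz step there is tight, which by Lemma~3 means the orbit of $|\psi'\ra$ consists of $d$ MUBs (unbiased also to the computational basis). The orbit of $|\psi\ra$ is $\vb{V}^\dagger$ applied to the orbit of $|\psi'\ra$, because $\vb{V}$ normalizes the product WH group so orbits map to orbits up to phases. Overlaps are preserved, so $|\psi\ra$ is a MUB fiducial, with the complementary basis $\vb{V}^\dagger|\vb{j}\ra$, the joint eigenbasis of the chosen strata. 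The converse runs the same argument backwards.
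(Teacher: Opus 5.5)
Your proposal is correct and follows the paper's own argument. A product Clifford $\bigotimes_i C_i$ sends each stratum $\mathcal{W}^{(i)}_{j_i}$ to the diagonal one, which yields an equimodular vector; Clifford invariance of magick and Lemma~\ref{lem:eq_mod} (with Lemma~3 for the equality case) then finish the proof. Your point that the hypothesis must be read as including local identity factors is genuine, and the paper leaves it implicit. Read literally, the claim fails: for instance, a T-type single-qubit magic state tensored with $|+\rangle$ satisfies $\langle Z\otimes Z\rangle=0$, yet no product Clifford makes it equimodular.

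There is one arithmetic slip in Step 1. The matrix $\begin{pmatrix}0&-1\\1&-j\end{pmatrix}$ sends $(1,j)$ to $(-j,1-j^2)$, not to $(-j,0)$. Instead, $\begin{pmatrix}j&-1\\1&0\end{pmatrix}\in SL(2,\mathbb{Z}_{d_i})$ sends $(1,j)\mapsto(0,1)$ directly, which is exactly what you need.
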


\begin{proof}
The above statement follows from the fact that each of the sets $\mathcal{W}_{j}^{(d_i)}$ \eqref{eq:wh_strata} can be mapped into $\mathcal{W}_{d_i}^{(d_i)}$ by some Clifford operator $C_i$:
\begin{equation*}
C_i \mathcal{W}_{j}^{(d_i)} C_i^\dagger = \mathcal{W}_{d_i}^{(d_i)}.
\end{equation*}
Thus, the corresponding state transformation $\bigotimes_i C_i|\psi\ra$ yields an equiangular state. Combining this observation with the above Lemmas and
the fact that Clifford operators 
do not change the degree of magic,
we arrive at the desired result.
\end{proof}

The above Lemma allows us to construct new families of isoentangled bases given one equimodular family by applying products of local Clifford operators. Because in each local subsystem there are $d_i(d_i+1)$ stabilizer states, organized in $d_i+1$ bases, using Clifford operations one can generate $\prod_i (d_i+1)$ families of MUBs. This is more than $d+1$ families of bases, which we would create by working with global Weyl-Heisenberg and Clifford groups.

It is worth noting that in all cases, the values of the magick measure coincide with the values of the global magic measure for corresponding scenarios.

Finally, we can show the relation between the product of magic for fiducial vectors of local MUBs
and SICs in each subspace
and the magick of fiducial vectors of a priori
isoentangled MUBs and SICs.

\begin{restatable}{obs}{lMcomp}
Let $\mathcal{H}_d = \bigotimes_{i = 0}^{n-1} \mathcal{H}_{d_i}$ be any product of local vector spaces,  $|f_{\text{MUB}}\ra$ and $|f_{\text{SIC}}\ra$ fiducial vectors of isoentangled MUBs and SICs in $\mathcal{H}_d$, and  $|f_{\text{MUB}, i}\ra$ and $|f_{\text{SIC}, i}\ra$ fiducial vectors of local MUBs and SICs in each subspace. Then
\begin{equation*}
\begin{aligned}
M(|f_{\text{MUB}}\ra) &> \prod_{i = 0}^{n-1} m(|f_{\text{MUB},i}\ra), \\
M(|f_{\text{SIC}}\ra) &> \prod_{i = 0}^{n-1} m(|f_{\text{SIC},i}\ra),
\end{aligned}
\end{equation*}
where for local vectors we considered the magic measure of each subsystem separately.
\end{restatable}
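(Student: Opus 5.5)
The plan is to compute both sides explicitly and reduce the claim to an elementary inequality between products. By Lemma~\ref{lem:eq_mod} and Lemma~\ref{lem:max_magic}, together with Lemmas 2 and 3, the left-hand sides are known in closed form:
\begin{equation*}
M(|f_{\text{MUB}}\ra) = 1+(d-1)\sqrt{d}, \qquad M(|f_{\text{SIC}}\ra) = 1+(d-1)\sqrt{d+1},
\end{equation*}
with $d=\prod_i d_i$. For the local factors I will use the single-system versions of the same statements, i.e.\ the case $n=1$ of the product WH group, which coincides with the global WH group of that subsystem. This gives $m(|f_{\text{MUB},i}\ra) = 1+(d_i-1)\sqrt{d_i}$ and $m(|f_{\text{SIC},i}\ra) = 1+(d_i-1)\sqrt{d_i+1}$. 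These are also the results of \cite{Feng2024}. The observation therefore reduces to the two strict inequalities
\begin{equation*}
1+(d-1)\sqrt{d} > \prod_i \bigl(1+(d_i-1)\sqrt{d_i}\bigr), \qquad 1+(d-1)\sqrt{d+1} > \prod_i \bigl(1+(d_i-1)\sqrt{d_i+1}\bigr),
\end{equation*}
valid for $n\ge 2$ and all $d_i\ge 2$. For $n=1$ equality holds, so the statement is understood for genuinely composite spaces.

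The key step is a two-factor inequality followed by induction on $n$. Define $f(x)=1+(x-1)\sqrt{x}$ for the MUB case and $g(x)=1+(x-1)\sqrt{x+1}$ for the SIC case. I claim that $f(ab)>f(a)f(b)$ and $g(ab)>g(a)g(b)$ hold for all $a,b\ge 2$. Granting this, induction closes the argument. Suppose $f(\prod_{i<k} d_i)\ge \prod_{i<k} f(d_i)$. Then
\begin{equation*}
f\Bigl(\prod_{i\le k} d_i\Bigr) > f\Bigl(\prod_{i<k} d_i\Bigr)\, f(d_k) \ge \prod_{i\le k} f(d_i).
\end{equation*}
The second step uses the induction hypothesis together with $f(d_k)>0$. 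The SIC case is identical with $g$ in place of $f$.

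For the MUB two-factor inequality I will expand both sides. Set $u=\sqrt a$ and $v=\sqrt b$. The left side is $1+(u^2v^2-1)uv$. The right side is $1+(u^2-1)u+(v^2-1)v+(u^2-1)(v^2-1)uv$. The difference, LHS minus RHS, simplifies to
\begin{equation*}
uv(u^2+v^2-2) - u^3 + u - v^3 + v.
\end{equation*}
Writing this as $u^3(v-1)+v^3(u-1)-2uv+u+v$ and using $u,v\ge\sqrt2$ should show that it is positive. A cleaner route is to substitute $u=1+s$, $v=1+t$ with $s,t>0$ and check that every coefficient is nonnegative.

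The main obstacle is the SIC inequality $g(ab)>g(a)g(b)$. The square roots $\sqrt{ab+1}$, $\sqrt{a+1}$ and $\sqrt{b+1}$ do not factor, so the expansion above does not transfer directly. I would first try the bound $\sqrt{(a+1)(b+1)}=\sqrt{ab+a+b+1}$ combined with $(ab-1)\ge (a-1)(b-1)+(a-1)+(b-1)$. This reduces the claim to showing
\begin{equation*}
(ab-1)\sqrt{ab+1} - (a-1)(b-1)\sqrt{(a+1)(b+1)} > (a-1)\sqrt{a+1}+(b-1)\sqrt{b+1}.
\end{equation*}
I would prove this by treating the left side as increasing in $b$ for fixed $a$ and checking the base case $b=2$. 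If that becomes messy, a fallback is to compare growth rates, since $g(x)\sim x^{3/2}$, and verify the small cases $a,b\in\{2,3\}$ directly: for instance $g(4)=1+3\sqrt5\approx7.71 > g(2)^2=(1+\sqrt3)^2\approx7.46$.
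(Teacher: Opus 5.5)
Your MUB argument is correct and follows the paper's own proof: the closed forms from Lemmas 2--3 (and their $n=1$ versions), a two-factor inequality, and induction. Your difference factors as $u(u^2-1)(v-1)+v(v^2-1)(u-1)>0$, and your substitution gives $4st+3s^2t+3st^2+s^3t+st^3$. The paper's displayed difference has the signs of $(d_1-1)\sqrt{d_1}+(d_2-1)\sqrt{d_2}$ flipped, so your version is the accurate one.

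The gap is the SIC inequality $g(ab)>g(a)g(b)$: your proposal contains no proof of it.
\begin{itemize}
\item Your ``bound'' is an identity, $ab-1=(a-1)(b-1)+(a-1)+(b-1)$. The displayed inequality is just $g(ab)-g(a)g(b)>0$ rearranged, so nothing has been reduced.
\item Showing that the left side increases in $b$ and checking $b=2$ is not enough. The right side contains $(b-1)\sqrt{b+1}$, which also increases, so you would need the \emph{difference} to be monotone. The base case $b=2$ is itself an inequality for all $a$.
\item The growth-rate fallback is inconclusive. Both $g(ab)$ and $g(a)g(b)$ behave like $(ab)^{3/2}$, so the leading terms cancel and the margin is thin. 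For example, $g(2a)/(g(2)g(a))\to 2\sqrt2/(1+\sqrt3)\approx 1.035$ as $a\to\infty$, and the ratio tends to $1$ when both arguments grow. Checking $a,b\in\{2,3\}$ leaves everything else open.
\end{itemize}
The structural reason is that in the MUB case $\sqrt a\sqrt b=\sqrt{ab}$ makes the cross term cancel exactly. Here $(a-1)(b-1)\sqrt{(a+1)(b+1)}>(a-1)(b-1)\sqrt{ab+1}$ leaves a negative remainder of the same order as the positive terms. The paper does not carry out this case either; it only asserts that it follows from the same argument.

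A short way to close it is to normalize by $x^{3/2}$. With $y=1/x$,
\begin{equation*}
h(x):=\frac{g(x)}{x^{3/2}}=y^{3/2}+(1-y)\sqrt{1+y},\qquad \frac{dh}{dy}=\frac{3}{2}\sqrt{y}-\frac{1+3y}{2\sqrt{1+y}}.
\end{equation*}
Then $dh/dy<0$ if and only if $9y(1+y)<(1+3y)^2$, i.e.\ if and only if $3y<1$. Hence $h$ strictly decreases on $[2,3]$, strictly increases on $[3,\infty)$, and tends to $1$. Since $h(2)=(1+\sqrt3)/(2\sqrt2)<1$, this gives $0<h<1$ on $[2,\infty)$.

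Now take integers $a\ge b\ge 2$. If $a\ge 3$, then $h(ab)>h(a)>h(a)h(b)$, which is $g(ab)>g(a)g(b)$. The only remaining case $a=b=2$ is the direct check $1+3\sqrt5>4+2\sqrt3$. With this two-factor inequality, your induction goes through unchanged for the SIC case.
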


\begin{proof}
The claim follows from direct computation, as presented in Appendix~\ref{app:prod}.
\end{proof}

\section*{Interlude: Fiducial vectors for single party MUBs}

In this short section, we recall the results already obtained in \cite{Feng2024} and
adopt them for our needs.
Note that the $T$ gate on systems of prime dimension $p\geq 5$ put forward therein is defined as $T = \sum_{j=0}^{p-1} \omega_p^{j^3} \op{j}$. Then, the fiducial state 
\begin{equation}
\label{f_MUBp}
\ket{f_{\text{MUB,p}}} = T\ket{+} = \sum_{j=0}^{p-1} \omega_p^{j^3} \ket{j} 
\in {\cal H}_p
\end{equation}
provides a fiducial MUB under the WH group in prime dimensions. 
This solution happens to coincide with the solution given by Alltop in \cite{alltop1980csequences} in the work on low correlation complex sequences. 
However, the importance of this solution
to the quantum information community has gone unnoticed 
as it was found already in 1980, before
the notion of mutually unbiased bases
was established in 1989 by 
Wootters
and Fields \cite{Wootters1989}. 
Similarly, Klappenecker and R{\"o}tteler in \cite{klappenecker2003constructions} have not 
emphasized
the WH-covariance of this solution while providing its multipartite extension. 

We note below that one can extend the solution given above by putting forward a family of $T$ gates $T_a = \sum_{j=0}^{p-1} \omega_p^{aj^3} \op{j}$ with $a\in\fF_p$. As the proof is relatively simple and uses basic properties of Gaussian sums, we leave it as an exercise to the reader. However, it is worth noting that the map $\varphi:a\mapsto a^3$ over a finite field $\fF_{p}$ is an isomorphism if and only if $p \neq 1 \text{ mod }3$. Whenever $\varphi$ fails to be an isomorphism, the family of gates $T_a$ cannot be reduced to simple permutations of indices, thus giving rise to a set of inequivalent gates generating equimodular states $T_a\ket{+}$ of maximal magic from the same input. Furthermore, given the product structure of Hilbert space, even permutations of indices -- and consequently basis states -- can result in different entanglement.

Additionally, due to an irregular behaviour of cube roots for $d = p = 2, 3$ resulting in the vanishing of the quadratic terms in the inner products, the $T$ gates 
in \cite{Feng2024}
are redefined with a common formula, $T = \sum_{i=0}^{p-1} \omega_{p^2}^{i^3}$. Thus, the structure is not based directly on the finite field $\fF_p$ -- a fact which will turn out to be useful for the multipartite solution for $p = 3$. 

\section{Fiducial vectors for multipartite isoentangled MUBs}
\label{sec:fiducial}

In what follows we extend the concept of fiducial vectors for MUBs. Inspired by the construction put forward in~\cite{Feng2024}, we
present its extension for multipartite systems of prime power dimension. Similarly to the case of single-party systems, the problem splits into three cases, $p\geq 5$, $p = 3$ and $p = 2$, which we tackle separately. We provide two constructions which cover all dimensions of the form $d = p^n$ for $p\geq 3$, while stating a no-go conjecture for $d=2^n$ with $n\geq 3$. \\ 
It is convenient to start with the general case $p \geq 5$.

\subsection{$p\geq 5$ case}

It turns out that the problem of the existence of a set of MUBs covariant with respect to the product group,
$WH(p)^{\otimes n}$, admits common solution for all primes $p \geq 5$.

    \begin{thm} \label{thm:f_mub}
        Consider a prime number $p\geq 5$, $n\in\mathbb{N}$, and a state
        of size $d=p^n$ of the form
        \begin{equation}\label{eq:MUB_fid}
        \ket{f_{\text{MUB}, p^n}} = \sum_{j=0}^{p^n-1} \omega_p^{\tr_{\fF} \left[a\vb{j}^{3}\right]} \ket{j}
         \in {\cal H}_p^{\otimes n}
        \end{equation}
        with $a\neq 0$; $\vb{j}$ are elements of the Galois field $\fF_{p^n}$ and $\tr_{\fF}$ is the field trace with respect to the said field.
        Then, $\ket{f_{\text{MUB},p^n}}$ is a MUB fiducial state under the action of the product WH group.
    \end{thm}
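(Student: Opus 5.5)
Our plan is to verify directly that $|\la f|\vb{W}_{\vb{k}\vb{l}}|f\ra| = 1/\sqrt{d}$ for every $\vb{k}\neq \vb{0}$, where $|f\ra$ denotes the normalized state $\frac{1}{\sqrt{d}}\ket{f_{\text{MUB},p^n}}$. By Lemma~\ref{lem:basis_gen} the orbit of the equimodular state $|f\ra$ already splits into $d$ orthonormal bases. Unbiasedness of the bases labelled by $\vb{k}$ and $\vb{k}'$ then follows because, up to phases, $|\la f|\vb{W}^\dagger_{\vb{k}\vb{l}}\vb{W}_{\vb{k}'\vb{l}'}|f\ra| = |\la f|\vb{W}_{\vb{k}'-\vb{k},\vb{l}'-\vb{l}}|f\ra|$. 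Equivalently, we can show $M(|f\ra)=1+(d-1)\sqrt d$ and invoke Lemma~\ref{lem:eq_mod}.

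First we identify the product WH group with additive structure of $\fF_{p^n}$. Fix an $\fF_p$-basis of $\fF_{p^n}$ so that $\vb{j}\in\fZ_p^n$ corresponds to a field element $j$. The shift $X^{\vb{k}}$ then acts as $\ket{j}\mapsto\ket{j+k}$ with field addition, since addition is componentwise mod $p$. The clock $Z^{\vb{l}}$ multiplies $\ket{j}$ by $\omega_p^{\vb{l}\cdot\vb{j}}$. Because the trace form is nondegenerate, there is a unique $u\in\fF_{p^n}$ with $\vb{l}\cdot\vb{j}=\tr_{\fF}(uj)$ for all $j$; we use the dual basis for this. So, up to a global phase,
\begin{equation*}
\la f|\vb{W}_{\vb{k}\vb{l}}|f\ra \propto \frac{1}{d}\sum_{j\in\fF_{p^n}} \omega_p^{\tr_{\fF}\left[a(j+k)^3 - a j^3 + u j\right]} ,
\end{equation*}
possibly with $j$ shifted by $k$ in the linear term, which changes only a phase.

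Next we expand $(j+k)^3-j^3 = 3kj^2+3k^2j+k^3$. The constant term gives a phase, and the exponent becomes $\tr_{\fF}(Aj^2+Bj)$ with $A=3ak$ and $B=3ak^2+u$. Since $p\ge5$, $3$ is invertible, and $a,k\neq0$ gives $A\neq0$. The sum is therefore a quadratic Gauss sum over $\fF_{p^n}$ for the nontrivial additive character $\psi(x)=\omega_p^{\tr_{\fF}(x)}$. Since $p$ is odd, $2A$ is invertible, so we complete the square with $j\mapsto j - B/(2A)$. This reduces the sum to a phase times $G(A)=\sum_j\psi(Aj^2)$. The standard evaluation $|G(A)|=\sqrt{p^n}$ can be proved in one line from $|G|^2=\sum_{j,h}\psi(A(j^2-h^2))$: substitute $h=j+t$ and sum over $j$ to force $t=0$, using $\sum_j\psi(-2Atj)=0$ for $t\neq0$. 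Hence $|\la f|\vb{W}_{\vb{k}\vb{l}}|f\ra|=\sqrt{d}/d=1/\sqrt d$.

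The main point requiring care is the first step, namely that the product WH group on $\mathcal{H}_p^{\otimes n}$ coincides with the field WH group through the trace-dual identification. The clock side needs the nondegeneracy of the trace form, and it also matters that $\tau$-phases, being global, are irrelevant. Once that identification is in place, the rest is the Gauss-sum computation, where $p\ge5$ is used exactly to make $3$ and $2$ invertible. For $p=3$ the quadratic term vanishes, which explains why the construction fails there.
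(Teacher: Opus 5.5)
Your proposal is correct and follows essentially the same route as the paper's proof in Appendix~\ref{app_proof}. Both arguments identify $\vb{X}^{\vb{b}}$ with field translations and $\vb{Z}^{\vb{c}}$ with the character $\omega_p^{\tr_{\fF}(c_* j)}$ via trace duality, expand the cube using that $3$ is invertible, complete the square using that $2$ is invertible, and evaluate a quadratic Gauss sum over $\fF_{p^n}$. The differences are minor: you obtain $|G(A)|=\sqrt{p^n}$ from the short $|G|^2$ computation rather than by diagonalizing the quadratic form $\tr_{\fF}(3ab\,j^2)$ into $n$ prime-field Gauss sums, and you delegate the $\vb{k}=\vb{0}$ case to Lemma~\ref{lem:basis_gen} instead of computing the $\delta_{\vb{c}\vb{0}}$ term directly.
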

    
    \begin{proof}
        The proof, based on the properties of the Galois fields, is presented in Appendix~\ref{app_proof}.
    \end{proof}

    We refer the reader interested in the structure of Galois fields to \cite{Lidl_Finite_Fields, Kibler_Galois_Rings}.
    An exemplary realization of the construction for $p = 5, n = 2, a = 1$ is provided in Appendix \ref{app:examples}. Note that for $n = 1$, the above construction reduces to the well known case of a single-partite system \cite{Feng2024}.

    The explicit expression
(\ref{eq:MUB_fid}) for the
MUB fiducial vector in prime power dimension $d=p^n$ constitutes
the key result of this work.
Although for higher dimensions
it is not easy to provide its geometric interpretation, one can rely on an analogy to the single qubit case, $d=2$,  presented in Fig.1.

    Inspection of the intermediate steps of the proof, together with explicit evaluation, reveals 
    that for $a=1$
    the present approach leads to the solution of Klappenecker and R{\"o}tteler \cite{klappenecker2003constructions}
    and therefore constitutes its generalization
    for $a\neq 1$. Note that solutions with $a=1$ exhibit a highly nontrivial entanglement structure.
    For instance, in the bipartite case, $n=2$ and $p = 5$, the Schmidt coefficients admit the closed-form
    \begin{equation*}
        \vec{\lambda} = \frac{1}{10} [3+\sqrt{5}, \;2,\; 2,\; 3-\sqrt{5},\; 0].
    \end{equation*}
    Numerical exploration for $p>5$ and $n=2$ shows that depending on whether the map $\varphi:a\mapsto a^3$ is an isomorphism  in $\fF_p$ or not, we find either two or four different Schmidt coefficient structures, respectively, with all except one admitting no closed-form.

    It is, however, possible to find values of $a$ such that the structure of Schmidt coefficients is particularly simple. Let us take 
    $\tilde{a} \in \fF_{p^2}$ such that $\tr_{\fF}[\tilde{a} b] = b_0$, where $b_0$ is the free term in the polynomial representing $b\in \fF_{p^2}$. Then

    \begin{widetext}
    \begin{equation}
    \begin{aligned}
    \rho^{(1)} := \Tr_2[|f_{\text{MUB}}\ra\la f_{\text{MUB}}|] &= \frac{1}{p} \sum_{i = 0}^{p-1}  |i\ra\la i | + \frac{1}{p^{3/2}} \epsilon_p \sum_{i \neq j} \jacobi{-3 q (i - j)}{p} \omega_p^{i^3 - j^3} |i\ra\la j|, \\
    \rho^{(2)} := \Tr_1[|f_{\text{MUB}}\ra\la f_{\text{MUB}}|] &=  \frac{1}{p}\left( |0\ra\la 0 | + \sum_{i = 1}^{p-1}( |i\ra\la i| + |i\ra\la -i| )\right) \\
    & = \frac{1}{p}\left(|0\ra\la 0| + \sum_{i = 1}^{\frac{p-1}{2}} \left(|i\ra+ |-i\ra\right)\left(\la i| + \la -i|\right)\right),\\
    \end{aligned}
    \end{equation}
    \end{widetext}
    where $\epsilon_p$ and the Jacobi symbol $\left(\frac{\cdot}{p}\right)$ depend of the finite field $\fF_p$ and $|-i\ra$  is understood modulo $p$.
    
    Since the second marginal is a convex combination of $\op{0}$ and Bell states in orthogonal subspaces, it is straightforward to see that the spectrum of both marginals, and thus the Schmidt vector of the state, is
    \begin{equation*}
        \vec{\lambda} = \frac{1}{p} \bigg[\underbrace{\vphantom{\bigg[]}2, \cdots ,2 }_{\frac{p-1}{2} \text{times}}, 1, \underbrace{\vphantom{\bigg[]}0,\cdots ,0}_{\frac{p-1}{2} \text{times}}\bigg].
    \end{equation*}

    After multiple application of the Gaussian sum formula, one can show that the marginals of fiducial vector generate a MUB-like structure under the action of the Weyl-Heisenberg group.  
    Upon defining $\rho_{(kl)}^{(x)} = W_{kl}\rho^{(x)} W_{kl}$ for $x = 1,2$, their Hilbert-Schmidt inner product gives
    \begin{equation}
        \Tr[\rho_{(kl)}^{(x)}\rho_{(k'l')}^{(x)}] = 
        \begin{cases}
        \frac{1}{p} &\text{ if } k \neq k'\\
        \frac{p-1}{p^2} &\text{ if } k = k'\text{ and } l \neq l' \\
        \frac{2p-1}{p^2} &\text{ if } k = k'\text{ and } l = l'
        \end{cases}
    \end{equation}
    
    One may also leverage the fact that a unique eigenstate to the eigenvalue $1/p$ on the second marginal is $\op{0}$. Consider the global state $|f_{\text{MUB}}\ra =  \sum_{\vb{j}=0}^{p^2-1} \omega_p^{\tr_{\fF}[\tilde{a} \vb{j}^{3}]} |\vb{j}\ra$, where the indices from local subsystem are encoded into exponent as $\vb{j} = j_1 + x j_2$ and the multiplication is performed modulo an irreducible quadratic polynomial. A post-selection, or projection of a global fiducial state, with respect to the non-degenerate eigenstate of its second marginal $|0\ra$ corresponds to tossing away all terms with non-zero $j_2$; the resulting state is thus $\left(\sum_{j_1=0}^{p-1} \omega_p^{j_1^{3}} |j_1\ra\right)\otimes |0\ra$, which gives a fiducial MUB vector on the first subsystem. On the level of marginals it means that this local fiducial is a non-degenerate eigenvector of $\rho^{(1)}$.
    
    Similar results follow for all $a$ such that $\Tr(a b) = a_* b_i \text{ mod } p$ for $i\in\{0,1\}$, where $a_* \in \fF_p$ is nonzero, resulting in a total of $2(p-1)$ possibilities with simplified Schmidt structure.

\subsection{$p = 3$ case}

Theorem~\ref{thm:f_mub} ceases to work for multi-qutrit systems due to 
a relatively curious fact. Heuristically, the construction \eqref{eq:MUB_fid} hinges on exponents similar to $(a+b)^3 - a^3 = 3a^2b + 3ab^2 + b^3$, which in principle are quadratic in $a$ modulo $p$, a property which is crucial for discrete Gauss sums to appear. When $p=3$, the entire expression is reduced to $(a+b)^3 - a^3 = b^3$ (due to $3 \equiv 0 \operatorname{mod} 3$), without the quadratic term required for the Gaussian sum observed for $p\geq 5$. 

This is further supported by results on nonexistence of Alltop sequences \cite{alltop1980csequences} for fields of characteristic $3^n$ and their relation to the so-called planar functions \cite{Hall2012}; the discussion of such objects falls beyond the scope of this manuscript, but we mention them for completeness of the discussion.

Despite the above reasons, 
knowing that a state
 based on the $9$-th root of unity,
 $\omega_9=\exp(2 \pi i /9)$,
\begin{equation*}
    \ket{f_{\text{MUB}_3}} = \frac{1}{\sqrt{3}}\mqty(1,&\omega_9,&\omega_9^8)
\end{equation*}
forms a fiducial vector for $n=1$ and $p=3$,
we extended the results for $n>1$ in two ways.

First, we find a general construction that produces fiducial vectors, also related to $\omega_9$.

\begin{thm} \label{thm:f_mub_3}
        Consider $n\in\mathbb{N}$ and a vector of the form
        \begin{equation}\label{eq:MUB_fid_3}
        \ket{f_{\text{MUB},3^n}} = \sum_{j=0}^{p^n-1} \omega_9^{\tr\left[a \vb{j}^{3}\right]} \ket{j} \in\mathcal{H}_3^{\otimes n},
        \end{equation}
        where $a \in GR(3^2, n)$ is not a divisor of $0$, $\vb{j}$ runs over elements of the Galois ring $GR(9,n)$, each belonging to a different coset $GR(9,n)/(3)$, and $\tr$ is the trace with respect to $GR(3^2,1) = \fZ/9\fZ$.
        Then, $\ket{f_{\text{MUB}}}$ is a MUB fiducial state under the action of the product WH group.
\end{thm}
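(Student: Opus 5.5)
By Lemma~\ref{lem:basis_gen}, the state \eqref{eq:MUB_fid_3} is equimodular, so its product-WH orbit already splits into $d=3^n$ orthonormal bases $\{\vb{W}_{\vb{k}\vb{l}}|f\ra\}_{\vb{l}}$, one for each $\vb{k}$. By Lemma~\ref{lem:eq_mod} and Lemma~3, it then suffices to show unbiasedness across different shifts. That is, for all $\vb{k}\neq 0$ and all $\vb{l}$ we need
\begin{equation*}
\abs{\la f|\vb{W}_{\vb{k}\vb{l}}|f\ra} = \frac{1}{\sqrt{d}}.
\end{equation*}
This is equivalent to the magick attaining $1+(d-1)\sqrt d$.

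\textbf{Setting up the sum.} I will fix the Teichm\"uller-type set of coset representatives $\mathcal{T}\subset GR(9,n)$ used in the statement. Each representative $\vb{j}$ is identified with the digit vector $(j_1,\dots,j_n)\in\fZ_3^n$ of its residue $\bar{\vb{j}}\in\fF_{3^n}$ in a fixed additive basis. The product shift $X^{\vb{k}}$ acts as $\bar{\vb{j}}\mapsto\bar{\vb{j}}+\bar{\vb{k}}$ on residues. The product clock $Z^{\vb{l}}$ contributes $\omega_3^{\vb{l}\cdot\vb{j}}=\omega_9^{3\,\vb{l}\cdot\vb{j}}$, which I will rewrite as $\omega_3^{\tr_{\fF}[\bar b\,\bar{\vb{j}}]}$ for a suitable $\bar b\in\fF_{3^n}$ using the dual basis. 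The overlap then becomes, up to a global phase,
\begin{equation*}
S=\frac1d\sum_{\bar{\vb{j}}\in\fF_{3^n}}\omega_9^{\tr[a((\vb{j}+\vb{k})'^3-\vb{j}^3)]}\;\omega_3^{\tr_{\fF}[\bar b\bar{\vb{j}}]},
\end{equation*}
where $(\vb{j}+\vb{k})'$ denotes the chosen representative of the sum coset.

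\textbf{Key lemma.} The crucial observation is that cubing in $GR(9,n)$ depends on the representative only modulo $9$ in a controlled way. If $x\equiv y \pmod 3$, write $x=y+3t$. Then $x^3=y^3+9(\dots)\equiv y^3\pmod 9$. Hence $\vb{j}^3\in GR(9,n)$ is a well-defined function of $\bar{\vb{j}}$, so the representative issue disappears. Then I expand
\begin{equation*}
(\vb{j}+\vb{k})^3-\vb{j}^3 = 3\vb{j}^2\vb{k}+3\vb{j}\vb{k}^2+\vb{k}^3 .
\end{equation*}
Multiples of $3$ in $GR(9,n)$ are $3\cdot(\text{residue})$, so $\omega_9^{\tr[3a(\vb{j}^2\vb{k}+\vb{j}\vb{k}^2)]}=\omega_3^{\tr_{\fF}[\bar a(\bar{\vb{j}}^2\bar{\vb{k}}+\bar{\vb{j}}\bar{\vb{k}}^2)]}$. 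Here I use that the trace $GR(9,n)\to\fZ/9$ reduces mod $3$ to $\tr_{\fF}$. This restores exactly the quadratic term that vanishes over $\fF_{3^n}$. The factor $\omega_9^{\tr[a\vb{k}^3]}$ is a constant phase.

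\textbf{Gauss sum evaluation.} The overlap reduces to
\begin{equation*}
\abs{S}=\frac1d\Big|\sum_{x\in\fF_{3^n}}\omega_3^{\tr_{\fF}[\bar a\bar{\vb{k}}x^2+c\,x]}\Big|
\end{equation*}
for some $c\in\fF_{3^n}$. Since $a$ is not a zero divisor, $\bar a\neq0$, and since $\vb{k}\neq0$, also $\bar a\bar{\vb{k}}\neq0$. This is a nondegenerate quadratic character sum over a field of odd characteristic. Completing the square (possible since $2$ is invertible) gives a quadratic Gauss sum of modulus $\sqrt{3^n}$, so $\abs{S}=d^{-1/2}$ as required.

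\textbf{Main obstacle.} I expect the hardest step to be the bookkeeping in the key lemma. It requires showing that the product-WH action on $\fZ_3^n$ indices really corresponds to addition of residues while the phases live in $GR(9,n)$. It also requires checking that the cross term $3\vb{j}^2\vb{k}$ depends only on residues and that the trace commutes with reduction mod $3$. I would verify these first, in the Teichm\"uller representation, before doing the Gauss sum.
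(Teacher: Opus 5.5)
Your proposal is correct and follows essentially the same route as the paper's proof: expand $(\vb{j}+\vb{k})^3-\vb{j}^3$ in $GR(9,n)$, use the factor $3$ in the cross terms to pass from $\omega_9$ and ring arithmetic to $\omega_3$ and $\fF_{3^n}$ arithmetic (the trace reducing mod $3$ to $\tr_{\fF}$), and finish with a nondegenerate quadratic Gauss sum of modulus $3^{n/2}$. Your explicit lemma that $x^3 \bmod 9$ depends only on $x \bmod 3$ is a useful addition: the shifted index is the digit-wise mod-$3$ sum rather than the ring sum $\vb{j}+\vb{k}$, and the paper's computation makes this replacement without comment.
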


    \begin{proof}
        The proof, based on the relations between the Galois ring and its residue field, is presented in Appendix~\ref{app_proof3}.
    \end{proof}
    An exemplary realization of the construction for $n = 2, a = 1$ is provided in Appendix \ref{app:examples}.
    \smallskip

The complete theory of Galois rings is beyond the scope of this manuscript; below we provide a basic outline of their construction. The Galois ring $GR(p^k,n)$ can be understood as an extension of the Galois field \cite{Krull1924, Kibler_Galois_Rings}. The ring $GR(p^k,1)$ corresponds to integers modulo $p^k$, where $p$ is some prime number: $\fZ/p^k\fZ$. The extensions $GR(p^k,n)$, are defined as a ring of polynomials $(\fZ/p^k\fZ)[x]$ modulo any monic polynomial $P(x)$ of degree $n$, which is irreducible modulo $p$, thus $GR(p^k,n) = (\fZ/p^k\fZ)[x]/P(x)$. 
Finally, the Galois ring $GR(p^k,n)$ possesses maximal ideal $(p)$, consisting of all divisors of $0$. Cosets of the ring modulo this ideal are isomorphic to residue field $GR(p^k,n)/(p) = \fF_{p^n}$.

This begs to pose the following question -- can the above construction over $GR(9,n)$ be understood as a sort of extension of Alltop functions beyond finite fields to Galois rings? Resolution of this problem falls far beyond the scope of this manuscript and thus we leave it as an open problem.
\begin{prob}
    Can the concept of Alltop sequences be extended to notions connected to Galois rings in a manner incorporating the solution presented in Theorem \ref{thm:f_mub_3}?
\end{prob}

The second result is connected with an extended numerical search in $n=2$, which resulted in a set of $9$ isoentangled Hadamard matrices generated from a triplet of fiducial states under a subgroup of the full product WH group. They are discussed and given explicitly in the Appendix~\ref{app:sporadic_set}. This solution hints at a possibility of existence of similar tri-fiducial structures in $d = 3^n$ for $n>2$ -- a problem extending beyond the scope of this manuscript, which is thus left for future investigation.

\subsection{$p=2$ case}

After demonstrating the solution for all $p>2$, we come back to the canonic example of qubits. Here we know of only two fiducial states for $n = 1, 2$, which are given by
\begin{subequations}
    \begin{align}
        \ket{f_{\text{MUB},2}} & = \frac{1}{\sqrt{2}}\mqty(1,&\omega_8), \\
        \ket{f_{\text{MUB},4}} & =  \frac{1}{2}\mqty(1,&\omega_4,&\omega_4,&\omega_4).
    \end{align}
\end{subequations}
The single qubit example is simple to understand geometrically -- one may think of a standard octahedron with vertices along $X, Y$ and $Z$ axes, rotated by $\pi/4$ around the $Z$ axis, thus resulting in all equatorial vertices related by $\pi$ rotations around either $X$, $Y$ or both $X$ and $Y$ axes, see Fig. \ref{fig:fiducial}. This is precisely reflected in the fiducial structure behind $\ket{f_{\text{MUB},2}}$.

It is hard to draw a similar intuition from $\ket{f_{\text{MUB},4}}$, with the only hint we know of being its relation to the earlier fully isoentangled solution given in~\cite{Czartowski2020}. One may take a complete set $\qty{B_0,\hdots,B_4}$ and rotate it so that the first basis is the identity, and the remaining four are given as $H_i = B_0^\dagger B_i$. From this one obtains the set of product WH-covariant Hadamard matrices generated by the fiducial vector $\ket{f_{\text{MUB},4}}$.

Despite the above examples, exhaustive numerical search in dimension $d = 2^3 = 8$ has shown that there is no fiducial vector $\ket{f_{\text{MUB},8}}$ built from powers of $\omega_8$. This points us to the following conjecture.
\begin{conj}
\label{conj:no_qubit_fid}
    There exists no fiducial MUB vector under the product WH group $\ket{\psi}$ of the form
    \begin{equation}
        \forall i \ip{\psi}{i} = \omega_8^{m_i},\, m_i \in \mathbb{Z}
    \end{equation}
    in dimension $d = 2^n$ for $n > 2$.
\end{conj}

It is conceivable that there exist fiducial vectors based on higher roots of unity, but we would go as far as to suggest that a stronger form of Conjecture~\ref{conj:no_qubit_fid} may hold -- that no multiqubit fiducial MUB vectors exist under the product WH group. 
Note an analogy to the earlier result by Godsil and Roy, which proved the non-existence of product WH-covariant SICs for $d = 2^n$ unless $n = 1,\, 3$ \cite[Lemma 3.1]{Godsil2009}. 
This observation provides further arguments in favour 
of the above conjecture.
 %

\section{Concluding Remarks}
\label{sec:summary}

\begin{table*}[t]
    \centering
    \scalebox{.8}{\begin{tabular}{lll}
    \hline
         & Complete MUBs & Dimension $d$ \\ \hline \hline
        Low periodic correlation sequences & $\qty{\ket{j}:j\in\fZ_{d}}$ & any prime $d = p\geq 5$ (Alltop \cite{alltop1980csequences})  \\
         & $\qty{\frac{1}{\sqrt{d}}\sum_{j=0}^{d} \omega_p^{(j+k)^3 + l(j+k)}\ket{j}: l\in\fZ_d}, k\in\fZ_d$ &  \\ \hline
        Alltop extension to finite fields & $\qty{\ket{j}:j\in\fF_{d}}$ & any prime-power $d = p^n$, $p\geq 5$  \\
         & $\qty{\frac{1}{\sqrt{d}}\sum_{j=0}^{d} \omega_p^{\tr_\fF[(j+k)^3 + l(j+k)]}\ket{j}: l\in\fF_d}, k\in\fF_d$ & (Klappenecker and R{\"o}tteler \cite{klappenecker2003constructions}  \\ \hline
        Via orthogonal unitary matrices & $\qty{\ket{j}:j\in\fF_{d}}$ & any prime $d = p \geq 3$ (Ivonovic \cite{Ivonovic1981})\\
         & $\qty{\frac{1}{\sqrt{d}}\sum_{j=0}^{d} \omega_p^{\tr_\fF[kj^2 + lj]}\ket{j}: l\in\fF_d}, k\in\fF_d$ & any prime-power $d = p^n$, $p\geq 3$ (Wootters and Fields \cite{Wootters1989})\\ \hline
         Eigenstates of Pauli group & $\qty{\ket{j}:j\in\fZ_{d}}$ & any prime $d = p$ (Bandyopadhyay \textit{et al.} \cite{Bandyopadhyay2002}) \\
         & $\qty{\text{Eigenstates of }XZ^k}, k\in\mathbb{Z}_p$ & (prime power formulae after modification) \\ \hline
         From fiducial states & $\qty{\ket{j}:j\in\fZ_{d}}$ & any prime $d = p\geq 5$ \\
         & $\qty{W_{kl}\ket{f_{\text{MUB}}}, k\in\fZ_d}, l\in\fZ_d$ &  (Feng and Luo \cite{Feng2024}, Alltop \cite{alltop1980csequences}) \\
         & with $\ket{f_{\text{MUB}}} = \frac{1}{\sqrt{d}}\sum_{i=1}^d \omega_d^{i^3}\ket{i}$ & (special cases for p = 2, 3) \\ \cline{2-3}
         & $\qty{\ket{j}:j\in\fF_{d}}$ & any prime-power $d = p^n, p\geq 5$ \\
         & $\qty{\vb{W_{kl}}\ket{f_{\text{MUB}}}, k\in\fF_d}, l\in\fF_d$ &  (This work, Theorem \ref{thm:f_mub})\\
         & with $\ket{f_{\text{MUB}}} = \frac{1}{\sqrt{d}}\sum_{i=1}^d \omega_p^{\tr_\fF(a i^3)}\ket{i}$ & (Klappenecker and R{\"o}tteler \cite{klappenecker2003constructions} for $a=1$)  \\ \cline{2-3}
         & $\qty{\ket{j}:j\in \fF_{3^n}}$ & any prime-power $d = 3^n$ \\
         & $\qty{\vb{W_{kl}}\ket{f_{\text{MUB}}}, k\in \fF_{3^n}}, l\in \fF_{3^n}$ &  (This work, Theorem \ref{thm:f_mub_3}) \\
         & with $\ket{f_{\text{MUB}}} = \frac{1}{\sqrt{d}}\sum_{i=1}^d \omega_9^{\tr_{GR}(a i^3)}\ket{i}$ &   \\ \cline{2-3}
    \end{tabular}}
    \caption{Selection of known MUB constructions with two last rows presenting novel contribution from this work. All the constructions above, except the last one, are based on the finite field $\fF_{p^n}$.
    Note that in the last case, $d = 3^n$, the indices $j,k,l$ are marked as elements of $\fF_{3^n}$. However, they run over elements of Galois ring $GR(9,n)$, each belonging to different coset $GR(9,n)/(3)$, and are identified with elements of residue field $\fF_{3^n}$ only during the calculation of overlaps.}
    \label{tab:MUB_constructions}
\end{table*}

In this work, we have investigated the properties of magick as the local counterpart of magic and analyzed its role in the context of product Weyl–Heisenberg (WH) covariant structures. We have shown that fiducial states of symmetric informationally complete (SIC) measurements and mutually unbiased bases (MUBs), when covariant with respect to the product WH group and provided that such structures exist, are necessarily maximizers of magick among all states and equatorial states, respectively. This extends the features previously demonstrated for the global WH group~\cite{Feng2024} to the multipartite setting.

We constructed fiducial states in ${\cal H}_d$,
which lead to $d$ a priori isoentangled MUBs in prime-power dimension, $d=p^n$, with $p \geq 3$.
In the qubit case, $p=2$,  
explicit examples of MUB fiducial states are provided for $d=2$ and $d=2^2$, while for $n>2$ parties
the non-existence of an analogous construction is conjectured. 
In this sense, our approach complements the well-established method of constructing SICs from fiducial states, by highlighting structural constraints specific to product-group covariance. 

The presented results feed into a variety of different constructions for MUBs, a selection of which we present in Table \ref{tab:MUB_constructions} for context.
Our construction of fiducial MUBs for $p\geq 5$ extends the low correlation sequences by Alltop \cite{alltop1980csequences} for single systems, as well as the multipartite solution given by Klappenecker and R{\"o}tteler \cite{klappenecker2003constructions}, jointly providing a family of solutions indexed by $a\in\fF_{p^n}$. For the bipartite case we demonstrated that the entanglement structure of the resulting states changes depending on the parameter $a$; furthermore, numerical exploration suggests that, depending on the properties of the cube in $\fF_p$, different entanglement structures emerge.

The results of this work naturally feed into the ongoing line of research concerned with the entanglement properties of mutually unbiased bases~\cite{Romero2005, Wiesniak_2011}. Owing to the local nature of the product WH group, 
the construction presented leads to a significant simplification compared to previous approaches: all nonlocal features are confined to the preparation of the fiducial state, while the remaining basis states are generated by purely local operations, assuming the computational basis as the natural reference.

Perhaps most importantly and surprisingly, our multipartite construction for $p=3$ goes beyond the known construction for MUBs and circumvents known no-go theorems for Alltop functions \cite{alltop1980csequences}
over finite fields by replacing them with Galois rings. This result suggests a deeper investigation into the properties of the underlying mathematical objects, which may provide equivalent of Alltop functions extended to the novel setting of Galois rings.

The existence of a single known example of a product WH-covariant SIC measurement, namely the Hoggar lines, further motivates the search for similar structures in other multipartite systems, or alternatively, for rigorous no-go results excluding their existence beyond this exceptional case.

Furthermore, due to possible applications in quantum information theory,
there is a large current interest in complex Hadamard
matrices with special structure \cite{BRZ24}.
From this perspective, the presented constructions constitute
sets of $d$ mutually unbiased complex Hadamard matrices
of order $d=p^n$ for $p\ge 3$,
which belong to the Butson class $BH(d,p)$ for $p \geq 5$, and $BH(d,9)$ for $p = 3$ - see \cite{TZ05}.
As shown in Appendix \ref{app:sporadic_set},
the sporadic solution for $d=3^2$,
generated by a subgroup of the product WH group,
belongs to a narrower Butson class $BH(9,3)$,
as all entries are integer powers of $\omega_3$,
the third root of unity.

As an aside, the present work reinforces a unifying structural theme: the emergence of highly symmetric measurement and design structures from remarkably economical fiducial data. Symmetric informationally complete measurements arise as group orbits of a single fiducial vector. In all odd prime-power dimensions, a complete set of $d+1$ mutually unbiased bases (MUBs) can likewise be generated from just two fiducial vectors, including a single state from the computational basis. The sporadic construction in $d = 9$ described in Appendix \ref{app:sporadic_set} provides an intermediate example, built from three fiducial vectors for a subgroup of product WH group. At the opposite extreme, it is known that even a single orthonormal basis can generate a complex projective 2-design under a 
$U(1)$ action \cite{avella2025cyclic}.

Taken together -- and in light of the results of \cite{Feng2024} -- these examples point toward a broader organizing principle: highly structured quantum designs often admit descriptions in terms of minimal generating data and their associated group frames. This perspective motivates a systematic investigation of fiducial structures and the extent to which symmetry and design properties can be derived from compressed, orbit-based constructions.
Finally, while the present work does not address the longstanding problem of the existence of complete sets of MUBs in composite dimensions, it is well known that, should such sets exist, they cannot consist solely of maximally entangled and product states. To the best of our knowledge, the problem of constructing almost-isoentangled MUBs in dimension six, or proving a corresponding no-go theorem, remains open.


\medskip 
\begin{acknowledgments}

It is a pleasure to thank Ingemar Bengtsson, Wojciech Bruzda, 
Lorenzo Campos Venuti,
Markus Grassl, David Gross,
 Philipp Hauke and Daniele Iannotti
for useful comments and valuable interactions. The authors acknowledge funding by the European Union under 
ERC Advanced Grant TAtypic, Project No. 101142236.
J. Cz. acknowledges support from of Taighde
Éireann – Research Ireland under Grant number IRCLA/2022/3922. 
RB acknowledges support by the National Science Center, Poland, under the contract number 2023/50/E/ST2/00472.

\end{acknowledgments}

\appendix
\onecolumngrid

\section{General features and maxima of magick}

In this Appendix we present technical proofs and calculations form Section~\ref{sec:extrema}.

\subsection{Properties of magick}
\label{app:magick_properties}

Let us start by presenting the properties of the non-stabilizerness monotone under consideration.

\mprop*

\begin{proof}
To prove the first statement we note that in each subsystem the conjugation by Clifford operation on elements from the Weyl-Heisenberg group is a bijection~\cite{Dai2022}, thus we have
    \begin{equation*}
    \begin{aligned}
        M(\vb{V}\rho \vb{V}^\dagger) &= \sum_{\vb{k},\vb{l}} |\Tr[\vb{V}\rho\vb{V}^\dagger \vb{W}_{\vb{k}\vb{l}}]| = \sum_{\vb{k},\vb{l}} |\Tr[\rho\vb{V}^\dagger \vb{W}_{\vb{k}\vb{l}}\vb{V}]| \\
        &= \sum_{\vb{k}',\vb{l}'} |\Tr[\rho \vb{W}_{\vb{k}'\vb{l}'}]| = M(\rho),
    \end{aligned}
    \end{equation*}
    where we defined $\vb{W}_{\vb{k}'\vb{l}'} = \vb{V}^\dagger \vb{W}_{\vb{k}\vb{l}}\vb{V}$.

    The second statement follows from direct calculations
    \begin{equation*}
    \begin{aligned}
    &M(p \rho + (1-p) \sigma) = \sum_{\vb{k},\vb{l}} | p\Tr[\rho \vb{W}_{\vb{k}\vb{l}}] + (1-p)\Tr[\sigma \vb{W}_{\vb{k}\vb{l}}]| \\
    & \leq p \sum_{\vb{k},\vb{l}} |\Tr[\rho \vb{W}_{\vb{k}\vb{l}}] | + (1-p) \sum_{\vb{k},\vb{l}}|\Tr[\sigma \vb{W}_{\vb{k}\vb{l}}]|  =p M(\rho) + (1-p) M(\sigma).
    \end{aligned}
    \end{equation*}

    To show the third point we notice that 
    \begin{equation*}
        M(\rho) = \sum_{\vb{k},\vb{l}}|\Tr[\vb{W}_{\vb{k}\vb{l}} \rho]| \geq \Tr[\vb{W}_{\vb{0}\vb{0}}\rho] = \Tr[\rho] = 1,
    \end{equation*}
    where the inequality becomes an equality in and only if all the omitted terms vanish $\Tr[\vb{W}_{\vb{k}\vb{l}}\rho] = 0$. However, since the Weyl-Heisenberg group on the Hilbert space of size $d_i$ provides an orthonormal basis for matrices of size $d_i\times d_i$, their product provides the basis for matrices of size $d\times d$, so this condition implies that $\rho$ is proportional to the identity.

    To show the lower bound in the fourth point, we once again use the fact that the Weyl-Heisenberg group on the Hilbert space of size $d_i$ provides an orthonormal basis for matrices of size $d_i\times d_i$: $\frac{1}{d} \sum_{k_i,l_i} |D_{k_il_i}\ra\la D_{k_i l_i}| = \id_{d_i^2}$, where the vectorization of any operator $X$ is defined as $X|\Psi_+\ra$ with $|\Psi_+\ra = \frac{1}{\sqrt{d}}\sum_i|ii\ra$ being the Bell state.
    By using this property on each subsystem we obtain
    \begin{equation}
    \label{eq:in_square_overlaps}
    \begin{aligned} 
     M(|\phi\ra) &= \sum_{\vb{k},\vb{l}} |\Tr[\vb{W}_{\vb{k}\vb{l}}|\psi\ra\la \psi|]| \geq \sum_{\vb{k},\vb{l}} \Tr[\vb{W}_{\vb{k}\vb{l}}|\psi\ra\la \psi|]\Tr[|\psi\ra\la \psi|\vb{W}_{\vb{k}\vb{l}}^\dagger] \\
     &=  \sum_{\vb{k},\vb{l}} \la \psi\otimes\overline{\psi}| \vb{W}_{\vb{k}\vb{l}}\ra\la \vb{W}_{\vb{k}\vb{l}}| \psi\otimes\overline{\psi}\ra   =   d\la \psi\otimes\overline{\psi}|\bigotimes_i \id_{d_i}| \psi\otimes\overline{\psi}\ra=  d \la\psi|\psi\ra^2 = d,
    \end{aligned}
    \end{equation}
    where we used the fact that each term $|\Tr[\vb{W}_{\vb{k}\vb{l}}|\psi\ra\la \psi|]|$ must be smaller than $1$, since the state $|\psi\ra$ is normalized, so these terms are smaller than their squares.
    Therefore, the equality in the above calculations is possible if and only if $d$ of these terms are equal $1$ and all the remaining ones are equal to zero. However, since $\vb{W}_{\vb{k}\vb{l}}$ is a product of elements from Weyl-Heisenberg group, this implies that $|\psi\ra$ is a product of stabilizer states.
\end{proof}

\subsection{Relation between distances and magick}\label{app:magick_saturation}

Next we connect the maxima of magick with SICs and MUBs.

\subsubsection{SIC}
\mSIC*

\begin{proof}
    Let us start with noticing that the product Weyl-Heisenberg group has exactly $d^2$ elements, so this is the number of vectors from the considered set $m = d^2$.
    By explicit calculations we can rewrite the expression for $P_{\text{ETC}}$ as
    \begin{equation*}
    \begin{aligned}
    P_{\text{ETC}}(\mathcal{O}_{|\psi\ra})  = & \frac{1}{2}\left( \sum_{\vb{k},\vb{l},\vb{k}',\vb{l'}}\left( |\la \psi_{\vb{k} \vb{l}}|\psi_{\vb{k}' \vb{l}}'\ra| - \sqrt{\frac{d^2 - d}{d(d^2-1)}}\right)^2  -  
    \sum_{\vb{k},\vb{l}}\left( |\la \psi_{\vb{k} \vb{l}}|\psi_{\vb{k} \vb{l}}\ra| - \sqrt{\frac{d^2 - d}{d(d^2-1)}}\right)^2 \right) \\
    = & \frac{1}{2} \sum_{\vb{k},\vb{l},\vb{k}',\vb{l'}}|\la \psi_{\vb{k} \vb{l}}|\psi_{\vb{k}' \vb{l}'}\ra|^2  -  \sqrt{\frac{d^2 - d}{d(d^2-1)}}\sum_{\vb{k},\vb{l},\vb{k}',\vb{l'}}|\la \psi_{\vb{k} \vb{l}}|\psi_{\vb{k}' \vb{l}'}\ra|
    + \frac{(d^2)^2(d^2 - d)}{2 d (d^2 - 1)} \\ & -\frac{d^2}{2} \left( 1 - \sqrt{\frac{d^2 - d}{d(d^2-1)}}\right)^2.
    \end{aligned}
    \end{equation*}
    The sum of the overlaps squared can be calculated by using the same methods as in \eqref{eq:in_square_overlaps}:
    \begin{equation}
    \label{eq:squared_overlaps}
    \begin{aligned}
     \sum_{\vb{k},\vb{l},\vb{k}',\vb{l'}}|\la \psi_{\vb{k} \vb{l}}|\psi_{\vb{k}' \vb{l}'}\ra|^2 &= d^2 \sum_{\vb{k}'',\vb{l''}} |\la \psi|\psi_{\vb{k}''\vb{l}''}\ra|^2  
     = d^2 \sum_{\vb{k}'',\vb{l''}} \Tr[\vb{W}_{\vb{k}''\vb{l}''}|\psi\ra\la \psi|]\Tr[|\psi\ra\la \psi|\vb{W}_{\vb{k}''\vb{l}''}^\dagger] \\
     &=  d^2  \sum_{\vb{k}'',\vb{l''}} \la \psi\otimes\overline{\psi}| \vb{W}_{\vb{k}''\vb{l}''}\ra\la \vb{W}_{\vb{k}''\vb{l}''}| \psi\otimes\overline{\psi}\ra 
     = d^2  \la \psi\otimes\overline{\psi}|\bigotimes_i \id_{d_i}| \psi\otimes\overline{\psi}\ra \\
     & =  d^3 \la\psi|\psi\ra^2 = d^3,
     \end{aligned}
    \end{equation}
    where in the first step we change the summation argument $\vb{W}_{\vb{k}\vb{l}}^{-1}\vb{W}_{\vb{k}'\vb{l}'} = \vb{W}_{\vb{k}''\vb{l}''}$, and utilize an identity resolution for each local subspace. 
    
    Finally, the sum of the overlaps translates into our measure of magick
  \begin{equation}
    \label{eq:lin_overlaps}
    \sum_{\vb{k},\vb{l},\vb{k}',\vb{l'}}|\la \psi_{\vb{k} \vb{l}}|\psi_{\vb{k}' \vb{l}'}\ra| = d^2 \sum_{\vb{k}'',\vb{l}''} |\la \psi_{\vb{k}'' \vb{l}''}|\psi\ra| = d^2 \sum_{\vb{k}'',\vb{l}''} \Tr[\vb{W}_{\vb{k}''\vb{l}''} |\psi\ra\la\psi|] = d^2 M(|\psi\ra).
    \end{equation}
    Thus, after straightforward calculations, we obtain the result.
\end{proof}

\subsubsection{MUBs}

\mMUB*
\begin{proof}
    Once again the cardinality of $\mathcal{O}_{|\psi\ra}$ is $d^2$, thus we consider candidates for $d$ mutually unbiased bases. To clarify the calculation we assume states from the same basis to have the same multiindex $\vb{k}$ (this does not, however, influence the results).
    By direct calculation of the divergence from MUBs we get
    \begin{equation*}
    \begin{aligned}
    P_{MUB}(\mathcal{O}_{|\psi\ra}) & = \frac{1}{2}\left(\sum_{\vb{k},\vb{l},\vb{k}',\vb{l'}} \left(|\la \psi_{\vb{k} \vb{l}}|\psi_{\vb{k}' \vb{l}'}\ra| - \frac{1}{\sqrt{d}} \right)^2
    - \sum_{\vb{k},\vb{l},\vb{l'}} \left(|\la \psi_{\vb{k} \vb{l}}|\psi_{\vb{k} \vb{l}'}\ra| - \frac{1}{\sqrt{d}} \right)^2 \right) = \\
    & = \frac{1}{2} \sum_{\vb{k},\vb{l},\vb{k}',\vb{l'}} |\la \psi_{\vb{k} \vb{l}}|\psi_{\vb{k}' \vb{l}'}\ra|^2 - \frac{1}{\sqrt{d}} \sum_{\vb{k},\vb{l},\vb{k}',\vb{l'}} |\la \psi_{\vb{k} \vb{l}}|\psi_{\vb{k}' \vb{l}'}\ra| + \frac{d^2}{2d} - d(d - \sqrt{d}),
    \end{aligned}
    \end{equation*}
    where for the second term we used the orthogonality within the same basis. The sums over the quadratic and linear overlaps can be performed in the same way as previously in \eqref{eq:squared_overlaps} and \eqref{eq:lin_overlaps}, thus the claim follows from straightforward calculations.
    \end{proof}

    \subsection{Total fiducial states have more magick than the product of all local ones}\label{app:prod}

    In this subsection we compare the product of local magic with magick. 

    \lMcomp*

    \begin{proof}
    We present the claim for fiducial vectors of MUBs, and omit the proof for SICs since it follows from exactly the same arguments, however with a more convoluted expression. 
    Let us show the statement by induction, starting from two-subsystem case with local dimensions $d_1,d_2$, so the global dimension is $d = d_1d_2$. Then
    \begin{equation*}
    \begin{aligned}
        M(|f_{\text{MUB}}\ra) - m(|f_{\text{MUB},1}\ra)m(|f_{\text{MUB},2}\ra) &= 1 + (d_1 d_2-1)\sqrt{d_1 d_2} - \left(1 + (d_1-1)\sqrt{d_1}\right)\left(1 + (d_2-1)\sqrt{d_2}\right) \\
        & = (d_1 - 1)\sqrt{d_1} + (d_2 - 1)\sqrt{d_2} + \sqrt{d_1 d_2}(d_1 + d_2 - 2),
    \end{aligned}
    \end{equation*}        
    which is obviously positive for $d_1,d_2 \geq 2$. 
    To show the claim for $m$-partite subsystem we may divide it into first party and all remaining ones and apply the above calculation to obtain
    \begin{equation*}
        M(|f_{\text{MUB}}\ra) > m(|f_{\text{MUB},0}\ra)M(|f_{\text{MUB},rest}\ra) > \prod_i m(|f_{\text{MUB},i}\ra),
    \end{equation*}
    where $M(|f_{\text{MUB},rest}\ra)$ is the magic of a fiducial vector for isoentangled MUBs on subsystems $(1,\cdots,n-1)$.
    \end{proof}

\section{Analytical formulas for $d$ a priori isoentangled MUBs in dimension $d = p^n$}\label{app_proof}

A fiducial vector under the product of multiple Weyl-Heisenberg groups $\mathcal{WH}^{\otimes n}$ 
for any prime $p \ge 5$
can be rewritten as
\begin{equation}
\label{eq:a1}
    \ket{f_{\text{MUB},p^n}} = T_{p^n}(a)\ket{+},
\end{equation}
 where the superposition state reads
$\ket{+} = p^{-{n/2}}\sum_{i=1}^{p^n} \ket{i}$ and 
the family of multipartite $T$ gates is defined as
\begin{equation}
\label{eq:a2}
    T_{p^n}(a) =  \sum_{j=0}^{p^2-1} \omega_p^{\tr_\fF(a j^{ 3})}\op{j},
\end{equation}
where $\omega_p = \exp(2\pi i/p)$ and the multiplication and powers in the exponent are performed with respect to the finite field $\fF_{p^n}$. We additionally take $a\neq 0$, since $T_{p^n}(0) = \mathbb{I}$.

\begin{proof}

    We consider overlaps of two states form the orbit
    \begin{equation}
        \abs{\ev{\vb{W}_{\vb{ij}}^\dagger \vb{W}_{\vb{kl}}}{f_{\text{MUB}}}} = \abs{\ev{\vb{X}^{\vb{k-i}}\vb{Z}^{\vb{l-j}}}{f_{\text{MUB}}}} = \abs{\ev{\vb{X}^{\vb{b}}\vb{Z}^{\vb{c}}}{f_{\text{MUB}}}},
    \end{equation}
    where we set $\vb{b} = \vb{k-i}$ and $\vb{c} = \vb{l-j}$.
    The identity is due to the fact that the commutation between $X$ and $Z$ introduces only a global phase, which is irrelevant to the absolute value. With this, we proceed to evaluate the absolute value

    \begin{subequations}
        \begin{align}
            \abs{\ev{\vb{X}^{\vb{b}}\vb{Z}^{\vb{c}}}{f_{\text{MUB}}}} & = \frac{1}{p^n}\abs{\sum_{i,j=0}^{p^n-1}\omega^{\tr_\fF(ai^3) - \tr_\fF(aj^3)}\omega^{\sum_\nu c_\nu i_\nu} \ip{j-b}{i}} \\ 
            & = \frac{1}{p^n}\abs{\sum_{i=0}^{p^n-1}\omega^{\tr_\fF (a[i^3-(i+b)^3] +c_* i)}} \\
            & = \frac{1}{p^n}\abs{\sum_{i=0}^{p^n-1}\omega^{-\tr_{\fF}[3a(i^2b + i (b^2 - \frac{c_*}{3a}))]}} \label{step:trace_element}\\
            & = \frac{1-\delta_{b0}}{p^n}\abs{\sum_{j=0}^{p^n-1}\omega^{-\tr_{\fF}\qty[3abj^2]}} + \frac{\delta_{b0}}{p^n}\abs{\sum_{i=0}^{p^n-1}\omega^{\tr_\fF(c_* i)}}\label{step:square_complete_relabel}\\
            & = \frac{1-\delta_{b0}}{p^n}\prod_{k=1}^n\abs{ \sum_{j_k=0}^{p-1}\omega^{\lambda_k j_k^2}} + \delta_{b0}\delta_{c0} = (1-\delta_{k i})p^{-n/2}  + \delta_{ki}\delta_{lj},\label{step:nondegenerate_diagonalization}
        \end{align}
    \end{subequations}
    where in \eqref{step:trace_element} we have used the uniqueness of the field trace $\tr_\fF: \fF_{p^n}\mapsto\fF_p$ as the linear map going to the base field, together with the existence of a unique element $c_*$ such that $\sum_\nu c_\nu i_\nu = \Tr(c_* i)$ \cite{Lidl_Finite_Fields}. In \eqref{step:square_complete_relabel} we introduced the relabelling $j = i -\frac{3ab^2 - c_*}{6ab}$, which induces a permutation of the elements of $\fF_{p^n}$ and introduces a phase factor irrelevant under the absolute value; for $b=0$ the relabelling cannot be introduced and the expression reduces to the sum of the terms with exponents linear in $i$, thus reducing this term to $\delta_{c0}$. Finally, in \eqref{step:nondegenerate_diagonalization} we utilised the fact that any non-degenerate quadratic form over $\fF_{p^n}$ as a vector space over $\fF_p$ can be diagonalised with $\lambda_k \neq 0$, thus allowing us to rewrite the expression as a product of $n$ discrete Gaussian sums, each contributing a factor of $\sqrt{p}$, which yields the final result.
\end{proof}

\section{Analytical formulas for $d$ isoentangled MUBs in dimension $d = 3^n$}\label{app_proof3}

Going back to the proof presented for $p\geq 5$, the point of failure for $p = 3$ becomes apparent already at step \eqref{step:trace_element}, and merits elucidation. 
In the field $\fF_{3^n}$, the multiplication by $3$  means $3a = a + a + a = 0$ with addition being shown explicitly as addition with respect to $\fF_{3^n}$ which vanishes identically due to the vector-like component-wise addition modulo $3$ operation involved in it.
For this reason, the $p=3$ case lacks the quadratic term, which would have produced the mutual unbiasedness between elements from different bases.

To tackle this difficulty, we consider instead the  polynomial extension of a commutative ring $(\fZ/9\fZ)[x]$ modulo $n$ degree monic polynomial $P(x)$ which would be irreducible modulo $3$, also known as  Galois ring $GR(3^2,n) := (\fZ/9\fZ)[x]/P(x)$ \cite{Krull1924, Kibler_Galois_Rings}. In this module we use the trace with respect to primary Galois ring, $\tr_{GR}[a]: GR(3^2,n)\to GR(3^2,1)$, in a standard way as a trace of multiplication by $a$ in any basis. Then the general formula for a fiducial vector in $\mathcal{H}_3^{\otimes n}$ is given by

\begin{equation}\label{eq:fid9}
    |f_{\text{MUB},3^n}\ra = \frac{1}{3^{n/2}} \sum_{j=0}^{p^n-1} \omega_9^{\tr_{GR}\left[a  j^{ 3}\right]} \ket{j},
\end{equation}
where $\omega_9 = \exp(2 \pi \mathrm{i}/9)$, the coefficient $a$ belongs to $GR(3^2,n)$ and is not a divisor of $0$. Furthermore, the indices of basis states are immersed in $GR(3^2,n)$ as $j = j_0 + j_1 x + \cdots$, where $j_0,j_1\cdots$ are indices of consecutive qutrits. Thus $j$ runs over all the elements from the Teichm\"uller set.

Indeed, by straightforward calculations we obtain

\begin{equation}\label{eq:fid3n}
    \begin{aligned}
    & \abs{\ev{\vb{X^bZ^c}}{f_{\text{MUB},3^n}}}  = \frac{1}{3^n}\abs{\sum_{i,j = 0}^{3^n-1} \omega_9^{\tr_{GR}(ai^{3}) - \tr_{GR}(aj^{ 3})}\omega_3^{\sum_\nu i_\nu c_\nu}\ip{j-b}{i}}\\
    & \hspace{2 cm} = \frac{1}{3^n}\abs{\sum_{i = 0}^{3^n-1} \omega_9^{\tr_{GR}(a i^{3} - a (i+b)^{3})}\omega_3^{\sum_\nu i_\nu c_\nu}} = \frac{1}{3^n}\abs{\sum_{i = 0}^{3^n-1} \omega_9^{-3(\tr_{GR}[a b i^{ 2}  + a b^2 i] - \sum_\nu i_\nu c_\nu)}}.
\end{aligned}    
\end{equation}

Notice that all the elements $i,b \in GR(3^2,n)$ have the polynomial coefficients in $\{0,1,2\}$, and so are the coefficients $c_{\nu}$. Furthermore one may choose $\tilde{a}$ with coefficients from $\{0,1,2\}$ that $3a = 3\tilde{a}$. This, together with the fact that the selected polynomial is irreducible over $\fF_3$, implies that the multiplication and addition of all elements in the bracket in the exponent behaves as in $\fF_{3^n}$. 
More rigorously, we leverage the fact that the residue field $GR(3^2,n)/(3)$, where $(3)$ is maximal ideal generated by $3$, is isomorphic to $\fF_{3^n}$ and treat the elements $a,i,b$ as representants.
Finally, we may exchange the trace with respect to $GR(3^2,1) = \fZ/9\fZ$ with the trace over $\fF_{3^n}$.

Thus, we replace $\omega_9^3$ with $\omega_3$ and continue the computations from \eqref{eq:fid3n} as in the previous subsection

\begin{equation}
\begin{aligned}
         &~~ \frac{1}{3^n}\abs{\sum_{i = 0}^{3^n-1} \omega_3^{-\tr_{GR}(a b i^{2} + a b^2 i - \sum_\nu i_\nu c_\nu)}}\ = \frac{1}{3^n}\abs{\sum_{i = 0}^{3^n-1} \omega_3^{-\tr_{\fF}\qty(abi^2 + (ab^2-c_*)i)}}  \\
         &= \frac{1}{3^n}\abs{\sum_{i = 0}^{3^n-1} \omega_3^{-\tr_{\fF}\qty[b\qty(i + \frac{ab^2-c_*}{2ab})^2]}}  = \frac{1}{3^n}\abs{\sum_{i = 0}^{3^n-1} \omega_3^{-\tr_{\fF}\qty[ab j^2]}} \\
         & = \frac{1-\delta_{b0}}{3^n}\prod_{k=1}^n \abs{\sum_{i_k=0}^{2}\omega_3^{\lambda_k i_k^2}} + \frac{\delta_{b0}}{3^n}\abs{\sum_{i = 0}^{3^n-1} \omega_3^{\tr_{\fF}\qty(c_*i)}} = (1-\delta_{b0)}3^{-n/2} + \delta_{b0}\delta_{c0},
\end{aligned} 
\end{equation}
where once again $c_*$ is the unique element in $\fF_{3^n}$ such that $c_*i = \sum_\nu i_\nu c_\nu$, and we twice relabelled our states, first with $j = i + \frac{ab^2-c_*}{2ab}$, and then by changing the basis to $\{i_k\}$ which diagonalize quadratic form $\tr_{\fF}[j^2]$, with eigenvalues $\lambda_k$.

\newpage
\section{Unbiased  bipartite Hadamard matrices 
of dimension $9$ and $25$
}\label{app:examples}

We provide below exemplary Butson-Hadamard matrices in dimensions $d = 9$ and $d = 25$, based on the constructions given in Theorem \ref{thm:f_mub_3} and \ref{thm:f_mub}, respectively, with $a = 1$.
These matrices are formed
of isoentangled vectors
belonging to the composite Hilbert spaces
of dimension $p^2$ with $p=3,5$,
as the purity of the partial traces
of all projectors is the same, given by \eqref{eq:isoMUHs_ent}.


\begin{equation}
{\tiny
    \begin{aligned}
        \LOG(H_9) = \frac{2\pi}{9}\mqty(\bullet & \bullet & \bullet & \bullet & \bullet & \bullet & \bullet & \bullet & \bullet \\
 1 & 4 & 7 & 1 & 4 & 7 & 1 & 4 & 7 \\
 8 & 5 & 2 & 8 & 5 & 2 & 8 & 5 & 2 \\
 4 & 4 & 4 & 7 & 7 & 7 & 1 & 1 & 1 \\
 8 & 2 & 5 & 2 & 5 & 8 & 5 & 8 & 2 \\
 \bullet & 6 & 3 & 3 & \bullet & 6 & 6 & 3 & \bullet \\
 5 & 5 & 5 & 2 & 2 & 2 & 8 & 8 & 8 \\
 \bullet & 3 & 6 & 6 & \bullet & 3 & 3 & 6 & \bullet \\
 1 & 7 & 4 & 7 & 4 & 1 & 4 & 1 & 7),
 &&
\LOG(H_{25}) =  \frac{2\pi}{5}\left(
\begin{array}{ccccccccccccccccccccccccc}
 \bullet & \bullet & \bullet & \bullet & \bullet & \bullet & \bullet & \bullet & \bullet & \bullet & \bullet & \bullet & \bullet & \bullet & \bullet & \bullet & \bullet & \bullet & \bullet & \bullet & \bullet & \bullet & \bullet & \bullet & \bullet \\
 4 & \bullet & 1 & 2 & 3 & 4 & \bullet & 1 & 2 & 3 & 4 & \bullet & 1 & 2 & 3 & 4 & \bullet & 1 & 2 & 3 & 4 & \bullet & 1 & 2 & 3 \\
 2 & 4 & 1 & 3 & \bullet & 2 & 4 & 1 & 3 & \bullet & 2 & 4 & 1 & 3 & \bullet & 2 & 4 & 1 & 3 & \bullet & 2 & 4 & 1 & 3 & \bullet \\
 3 & 1 & 4 & 2 & \bullet & 3 & 1 & 4 & 2 & \bullet & 3 & 1 & 4 & 2 & \bullet & 3 & 1 & 4 & 2 & \bullet & 3 & 1 & 4 & 2 & \bullet \\
 1 & \bullet & 4 & 3 & 2 & 1 & \bullet & 4 & 3 & 2 & 1 & \bullet & 4 & 3 & 2 & 1 & \bullet & 4 & 3 & 2 & 1 & \bullet & 4 & 3 & 2 \\
 \bullet & \bullet & \bullet & \bullet & \bullet & 1 & 1 & 1 & 1 & 1 & 2 & 2 & 2 & 2 & 2 & 3 & 3 & 3 & 3 & 3 & 4 & 4 & 4 & 4 & 4 \\
 2 & 3 & 4 & \bullet & 1 & 3 & 4 & \bullet & 1 & 2 & 4 & \bullet & 1 & 2 & 3 & \bullet & 1 & 2 & 3 & 4 & 1 & 2 & 3 & 4 & \bullet \\
 \bullet & 2 & 4 & 1 & 3 & 1 & 3 & \bullet & 2 & 4 & 2 & 4 & 1 & 3 & \bullet & 3 & \bullet & 2 & 4 & 1 & 4 & 1 & 3 & \bullet & 2 \\
 3 & 1 & 4 & 2 & \bullet & 4 & 2 & \bullet & 3 & 1 & \bullet & 3 & 1 & 4 & 2 & 1 & 4 & 2 & \bullet & 3 & 2 & \bullet & 3 & 1 & 4 \\
 \bullet & 4 & 3 & 2 & 1 & 1 & \bullet & 4 & 3 & 2 & 2 & 1 & \bullet & 4 & 3 & 3 & 2 & 1 & \bullet & 4 & 4 & 3 & 2 & 1 & \bullet \\
 \bullet & \bullet & \bullet & \bullet & \bullet & 2 & 2 & 2 & 2 & 2 & 4 & 4 & 4 & 4 & 4 & 1 & 1 & 1 & 1 & 1 & 3 & 3 & 3 & 3 & 3 \\
 4 & \bullet & 1 & 2 & 3 & 1 & 2 & 3 & 4 & \bullet & 3 & 4 & \bullet & 1 & 2 & \bullet & 1 & 2 & 3 & 4 & 2 & 3 & 4 & \bullet & 1 \\
 1 & 3 & \bullet & 2 & 4 & 3 & \bullet & 2 & 4 & 1 & \bullet & 2 & 4 & 1 & 3 & 2 & 4 & 1 & 3 & \bullet & 4 & 1 & 3 & \bullet & 2 \\
 \bullet & 3 & 1 & 4 & 2 & 2 & \bullet & 3 & 1 & 4 & 4 & 2 & \bullet & 3 & 1 & 1 & 4 & 2 & \bullet & 3 & 3 & 1 & 4 & 2 & \bullet \\
 \bullet & 4 & 3 & 2 & 1 & 2 & 1 & \bullet & 4 & 3 & 4 & 3 & 2 & 1 & \bullet & 1 & \bullet & 4 & 3 & 2 & 3 & 2 & 1 & \bullet & 4 \\
 \bullet & \bullet & \bullet & \bullet & \bullet & 3 & 3 & 3 & 3 & 3 & 1 & 1 & 1 & 1 & 1 & 4 & 4 & 4 & 4 & 4 & 2 & 2 & 2 & 2 & 2 \\
 \bullet & 1 & 2 & 3 & 4 & 3 & 4 & \bullet & 1 & 2 & 1 & 2 & 3 & 4 & \bullet & 4 & \bullet & 1 & 2 & 3 & 2 & 3 & 4 & \bullet & 1 \\
 \bullet & 2 & 4 & 1 & 3 & 3 & \bullet & 2 & 4 & 1 & 1 & 3 & \bullet & 2 & 4 & 4 & 1 & 3 & \bullet & 2 & 2 & 4 & 1 & 3 & \bullet \\
 4 & 2 & \bullet & 3 & 1 & 2 & \bullet & 3 & 1 & 4 & \bullet & 3 & 1 & 4 & 2 & 3 & 1 & 4 & 2 & \bullet & 1 & 4 & 2 & \bullet & 3 \\
 1 & \bullet & 4 & 3 & 2 & 4 & 3 & 2 & 1 & \bullet & 2 & 1 & \bullet & 4 & 3 & \bullet & 4 & 3 & 2 & 1 & 3 & 2 & 1 & \bullet & 4 \\
 \bullet & \bullet & \bullet & \bullet & \bullet & 4 & 4 & 4 & 4 & 4 & 3 & 3 & 3 & 3 & 3 & 2 & 2 & 2 & 2 & 2 & 1 & 1 & 1 & 1 & 1 \\
 \bullet & 1 & 2 & 3 & 4 & 4 & \bullet & 1 & 2 & 3 & 3 & 4 & \bullet & 1 & 2 & 2 & 3 & 4 & \bullet & 1 & 1 & 2 & 3 & 4 & \bullet \\
 2 & 4 & 1 & 3 & \bullet & 1 & 3 & \bullet & 2 & 4 & \bullet & 2 & 4 & 1 & 3 & 4 & 1 & 3 & \bullet & 2 & 3 & \bullet & 2 & 4 & 1 \\
 \bullet & 3 & 1 & 4 & 2 & 4 & 2 & \bullet & 3 & 1 & 3 & 1 & 4 & 2 & \bullet & 2 & \bullet & 3 & 1 & 4 & 1 & 4 & 2 & \bullet & 3 \\
 3 & 2 & 1 & \bullet & 4 & 2 & 1 & \bullet & 4 & 3 & 1 & \bullet & 4 & 3 & 2 & \bullet & 4 & 3 & 2 & 1 & 4 & 3 & 2 & 1 & \bullet \\
\end{array}
\right) 
    \end{aligned}}
\end{equation}
Following notation of \cite{TZ05},
$\LOG$ denotes here element-wise logarithm and $\bullet$ represents zero entries.
The first column of both matrices corresponds to the appropriate fiducial vector $\ket{f_{\text{MUB}}}$, with the remaining columns obtained by local action of phase operators $Z^i\otimes Z^j$ with $i, j = 1, \hdots, p$. 
The other unbiased Hadamard matrices, comprising the complete set of mutually unbiased bases,
can be obtained by local action of products of shift operators $X^k\otimes X^l$ with $k, l = 1,\hdots,p$. 

Note that $H_9$ belongs to the Butson class $BH(9,9)$, while $H_{25}$ forms an example of the Butson class $BH(25,5)$. At the same time, both Hadamard matrices are equivalent, in the standard Hadamard-related sense, to the tensor product of local Fourier matrices $F_p^{\otimes 2}$; the equivalence is a consequence of being generated from a single fiducial vector by the action of $Z^i\otimes Z^j$.

Similar remarks extend for $n>2$ and $p\geq 3$, with Butson classes $BH(3^n,9)$ and $BH(p^n,p)$ covering all the presented solutions and equivalence to tensor powers of local Fourier matrices $F_p^{\otimes n}$ holding across the board and extending to the sporadic cases for $d=2, 4$ and $d = 9$ presented in Appendix \ref{app:sporadic_set}.

\section{Sporadic set of isoentangled Hadamard matrices in $d=9$}
\label{app:sporadic_set}

In dimension $d=9$ we were able to find a complete set of mutually unbiased isoentangled Hadamard matrices which do not originate from a fiducial vector relative to the product Weyl-Heisenberg group. Interestingly, they can still be reframed as three independent orbits of a subgroup of the product WH group, each with its own fiducial vector. The fiducial vectors are of the form
\begin{equation}
    \operatorname{LOG}\qty(\ket{\psi_0}) = \frac{2\pi}{3}
    \begin{pmatrix}
    0\\1\\1\\1\\1\\0\\1\\1\\0
    \end{pmatrix}, \quad
    \operatorname{LOG}\qty(\ket{\psi_1}) = \frac{2\pi}{3}
    \begin{pmatrix}
    0\\0\\1\\0\\0\\1\\0\\1\\0
    \end{pmatrix}, \quad
    \operatorname{LOG}\qty(\ket{\psi_2}) = \frac{2\pi}{3}
    \begin{pmatrix}
    0\\0\\0\\0\\0\\0\\1\\2\\0
    \end{pmatrix}, 
\end{equation}
where $\operatorname{LOG}(\cdot)$ should be understood as element-wise logarithm of the complex vectors. It is thus apparent that this solution is of Butson type $BH(9, \,3)$, only involving third roots of unity. Explicit computation shows that the purity of their reductions is given by $\gamma(\ket{\psi_i}) = 5/9$, in agreement with \eqref{eq:isoMUHs_ent}, necessary to reproduce a full set of isoentangled mutually unbiased Hadamard matrices. Using the complete product WH group one can generate three complete families of MUHs $\mathcal{B}_i$ indexed by the power of the $X$ operator on the first subsystem

\begin{equation}
    \mathcal{B}_{i} = \qty{H_{3l+j}^{(i)} = \qty{(X^i\otimes X^{j})Z_{\vb{k}}\ket{\psi_{l}}}_{\vb{k}=0}^{8}}_{j,l=0}^{2},
\end{equation}
where $\vb{k}$ indexes elements within the bases and $j, l$ index jointly different bases of the MUBs. It is interesting to note that bases from different families show a product structure reminiscent of unbiasedness, with $\abs{\ip{\psi}{\phi}}^2 \in \qty{0,1/9,1/3}$ for $\ket{\psi}\in\mathcal{B}_i, \ket{\phi}\in\mathcal{B}_j, i\neq j$.

To provide an explicit example, the first set $\mathcal{B}_0$ can be written in a succinct manner, after dephasing of the first component for each vector, as

\allowdisplaybreaks
\begin{equation}
\hspace{-0.5 cm}
{\tiny 
\setlength{\arraycolsep}{3.3pt} 
\centering
\begin{aligned}
& \operatorname{LOG}\qty(H^{(0)}_0)=\frac{2\pi}{3}
\begin{pmatrix}
 \bullet & \bullet & \bullet & \bullet & \bullet & \bullet & \bullet & \bullet & \
\bullet \\
 1 & 2 & \bullet & 1 & 2 & \bullet & 1 & 2 & \bullet \\
 1 & \bullet & 2 & 1 & \bullet & 2 & 1 & \bullet & 2 \\
 1 & 1 & 1 & 2 & 2 & 2 & \bullet & \bullet & \bullet \\
 1 & 2 & \bullet & 2 & \bullet & 1 & \bullet & 1 & 2 \\
 \bullet & 2 & 1 & 1 & \bullet & 2 & 2 & 1 & \bullet \\
 1 & 1 & 1 & \bullet & \bullet & \bullet & 2 & 2 & 2 \\
 1 & 2 & \bullet & \bullet & 1 & 2 & 2 & \bullet & 1 \\
 \bullet & 2 & 1 & 2 & 1 & \bullet & 1 & \bullet & 2
\end{pmatrix}, \
\operatorname{LOG}\qty(H^{(0)}_1)=\frac{2\pi}{3}
\begin{pmatrix}
 \bullet & \bullet & \bullet & \bullet & \bullet & \bullet & \bullet & \bullet & \
\bullet \\
 \bullet & 1 & 2 & \bullet & 1 & 2 & \bullet & 1 & 2 \\
 2 & 1 & \bullet & 2 & 1 & \bullet & 2 & 1 & \bullet \\
 \bullet & \bullet & \bullet & 1 & 1 & 1 & 2 & 2 & 2 \\
 \bullet & 1 & 2 & 1 & 2 & \bullet & 2 & \bullet & 1 \\
 2 & 1 & \bullet & \bullet & 2 & 1 & 1 & \bullet & 2 \\
 2 & 2 & 2 & 1 & 1 & 1 & \bullet & \bullet & \bullet \\
 \bullet & 1 & 2 & 2 & \bullet & 1 & 1 & 2 & \bullet \\
 \bullet & 2 & 1 & 2 & 1 & \bullet & 1 & \bullet & 2
\end{pmatrix}, \
\operatorname{LOG}\qty(H^{(0)}_2)=\frac{2\pi}{3}
\begin{pmatrix}
 \bullet & \bullet & \bullet & \bullet & \bullet & \bullet & \bullet & \bullet & \
\bullet \\
 \bullet & 1 & 2 & \bullet & 1 & 2 & \bullet & 1 & 2 \\
 2 & 1 & \bullet & 2 & 1 & \bullet & 2 & 1 & \bullet \\
 2 & 2 & 2 & \bullet & \bullet & \bullet & 1 & 1 & 1 \\
 \bullet & 1 & 2 & 1 & 2 & \bullet & 2 & \bullet & 1 \\
 \bullet & 2 & 1 & 1 & \bullet & 2 & 2 & 1 & \bullet \\
 \bullet & \bullet & \bullet & 2 & 2 & 2 & 1 & 1 & 1 \\
 \bullet & 1 & 2 & 2 & \bullet & 1 & 1 & 2 & \bullet \\
 2 & 1 & \bullet & 1 & \bullet & 2 & \bullet & 2 & 1
\end{pmatrix}, \\[6mm]
& \operatorname{LOG}\qty(H^{(0)}_3)=\frac{2\pi}{3}
\begin{pmatrix}
 \bullet & \bullet & \bullet & \bullet & \bullet & \bullet & \bullet & \bullet & \
\bullet \\
 \bullet & 1 & 2 & \bullet & 1 & 2 & \bullet & 1 & 2 \\
 1 & \bullet & 2 & 1 & \bullet & 2 & 1 & \bullet & 2 \\
 \bullet & \bullet & \bullet & 1 & 1 & 1 & 2 & 2 & 2 \\
 \bullet & 1 & 2 & 1 & 2 & \bullet & 2 & \bullet & 1 \\
 1 & \bullet & 2 & 2 & 1 & \bullet & \bullet & 2 & 1 \\
 \bullet & \bullet & \bullet & 2 & 2 & 2 & 1 & 1 & 1 \\
 1 & 2 & \bullet & \bullet & 1 & 2 & 2 & \bullet & 1 \\
 \bullet & 2 & 1 & 2 & 1 & \bullet & 1 & \bullet & 2
\end{pmatrix}, \
\operatorname{LOG}\qty(H^{(0)}_4)=\frac{2\pi}{3}
\begin{pmatrix}
\bullet & \bullet & \bullet & \bullet & \bullet & \bullet & \bullet & \bullet & \bullet \\
\bullet & 1 & 2 & \bullet & 1 & 2 & \bullet & 1 & 2 \\
1 & \bullet & 2 & 1 & \bullet & 2 & 1 & \bullet & 2 \\
\bullet & \bullet & \bullet & 1 & 1 & 1 & 2 & 2 & 2 \\
1 & 2 & \bullet & 2 & \bullet & 1 & \bullet & 1 & 2 \\
\bullet & 2 & 1 & 1 & \bullet & 2 & 2 & 1 & \bullet \\
\bullet & \bullet & \bullet & 2 & 2 & 2 & 1 & 1 & 1 \\
\bullet & 1 & 2 & 2 & \bullet & 1 & 1 & 2 & \bullet \\
1 & \bullet & 2 & \bullet & 2 & 1 & 2 & 1 & \bullet
\end{pmatrix}, \
\operatorname{LOG}\qty(H^{(0)}_5)=\frac{2\pi}{3}
\begin{pmatrix}
\bullet & \bullet & \bullet & \bullet & \bullet & \bullet & \bullet & \bullet & \bullet \\
1 & 2 & \bullet & 1 & 2 & \bullet & 1 & 2 & \bullet \\
\bullet & 2 & 1 & \bullet & 2 & 1 & \bullet & 2 & 1 \\
\bullet & \bullet & \bullet & 1 & 1 & 1 & 2 & 2 & 2 \\
\bullet & 1 & 2 & 1 & 2 & \bullet & 2 & \bullet & 1 \\
1 & \bullet & 2 & 2 & 1 & \bullet & \bullet & 2 & 1 \\
\bullet & \bullet & \bullet & 2 & 2 & 2 & 1 & 1 & 1 \\
\bullet & 1 & 2 & 2 & \bullet & 1 & 1 & 2 & \bullet \\
1 & \bullet & 2 & \bullet & 2 & 1 & 2 & 1 & \bullet
\end{pmatrix}, \\[6mm]
& \operatorname{LOG}\qty(H^{(0)}_6)=\frac{2\pi}{3}
\begin{pmatrix}
 \bullet & \bullet & \bullet & \bullet & \bullet & \bullet & \bullet & \bullet & \
\bullet \\
 \bullet & 1 & 2 & \bullet & 1 & 2 & \bullet & 1 & 2 \\
 \bullet & 2 & 1 & \bullet & 2 & 1 & \bullet & 2 & 1 \\
 \bullet & \bullet & \bullet & 1 & 1 & 1 & 2 & 2 & 2 \\
 \bullet & 1 & 2 & 1 & 2 & \bullet & 2 & \bullet & 1 \\
 \bullet & 2 & 1 & 1 & \bullet & 2 & 2 & 1 & \bullet \\
 1 & 1 & 1 & \bullet & \bullet & \bullet & 2 & 2 & 2 \\
 2 & \bullet & 1 & 1 & 2 & \bullet & \bullet & 1 & 2 \\
 \bullet & 2 & 1 & 2 & 1 & \bullet & 1 & \bullet & 2
\end{pmatrix}, \
\operatorname{LOG}\qty(H^{(0)}_7)=\frac{2\pi}{3}
\begin{pmatrix}
 \bullet & \bullet & \bullet & \bullet & \bullet & \bullet & \bullet & \bullet & \
\bullet \\
 \bullet & 1 & 2 & \bullet & 1 & 2 & \bullet & 1 & 2 \\
 \bullet & 2 & 1 & \bullet & 2 & 1 & \bullet & 2 & 1 \\
 1 & 1 & 1 & 2 & 2 & 2 & \bullet & \bullet & \bullet \\
 2 & \bullet & 1 & \bullet & 1 & 2 & 1 & 2 & \bullet \\
 \bullet & 2 & 1 & 1 & \bullet & 2 & 2 & 1 & \bullet \\
 \bullet & \bullet & \bullet & 2 & 2 & 2 & 1 & 1 & 1 \\
 \bullet & 1 & 2 & 2 & \bullet & 1 & 1 & 2 & \bullet \\
 \bullet & 2 & 1 & 2 & 1 & \bullet & 1 & \bullet & 2
\end{pmatrix}, \
\operatorname{LOG}\qty(H^{(0)}_8)=\frac{2\pi}{3}
\begin{pmatrix}
 \bullet & \bullet & \bullet & \bullet & \bullet & \bullet & \bullet & \bullet & \
\bullet \\
 1 & 2 & \bullet & 1 & 2 & \bullet & 1 & 2 & \bullet \\
 2 & 1 & \bullet & 2 & 1 & \bullet & 2 & 1 & \bullet \\
 2 & 2 & 2 & \bullet & \bullet & \bullet & 1 & 1 & 1 \\
 2 & \bullet & 1 & \bullet & 1 & 2 & 1 & 2 & \bullet \\
 2 & 1 & \bullet & \bullet & 2 & 1 & 1 & \bullet & 2 \\
 2 & 2 & 2 & 1 & 1 & 1 & \bullet & \bullet & \bullet \\
 2 & \bullet & 1 & 1 & 2 & \bullet & \bullet & 1 & 2 \\
 2 & 1 & \bullet & 1 & \bullet & 2 & \bullet & 2 & 1
\end{pmatrix},
\end{aligned}}
\label{rogue_MUB}
\end{equation}
where 
$\bullet$ denotes entries whose phases are zeros.

The tri-fiducial structure of the solution is a straightforward consequence of the Ansatz used to obtain it. We considered three vectors $|\psi_0\ra$, $\ket{\psi_1}$, and $|\psi_2\ra$ with equal entanglement $\gamma(|\psi_i\ra)=5/9$, and magick $M(|\psi_i\ra)=\text{const}$, which belonged to different orbits for the product WH group $\mathcal{O}_{|\psi_1\ra}\cap\mathcal{O}_{|\psi_2\ra}\cap\mathcal{O}_{|\psi_3\ra}=\emptyset$. Since any MUB can be represented as a Hadamard matrix \cite{mcnulty2024mutually}, we adopted the following ansatz for these vectors
\begin{equation}
|\psi_i\rangle = \frac{1}{3}
\begin{pmatrix}
1, \omega_3^{i_1}, \omega_3^{i_2}, \dots, \omega_3^{i_{8}}
\end{pmatrix},
\label{ansatz}
\end{equation}
where $\omega_3=e^{2\pi \mathrm{i}/3}$ and $(i_1,\ldots,i_8)\in\mathbb{Z}_3^{\times 8}$. 

Due to a limited size of the problem an exhaustive search over $\mathbb{Z}_3^{\times 8}$ has been performed, which resulted
in a triplet
that satisfied the aforementioned constraints. 

The structure illustrated above presents itself as a further extension of the idea of fiducial sets. Namely, the basic idea of a single fiducial vector present in SIC was first extended to a pair of fiducial vectors (including a single vector from the computational basis) to produce a complete set of $d+1$ MUBs in odd prime power dimensions. Now it is subsequently extended to the sporadic solution presented here, consisting in a triplet of fiducial vectors producing a complete set of unbiased Hadamard matrices
under a subgroup of the product WH group, which can additionally be complemented by
the computational basis. A similar idea of a single basis generating complex projective 2-designs under $U(1)$ action has been introduced recently in \cite{avella2025cyclic}. Altogether, they present a generalized concept of fiducial structures, or group frames as conceived in \cite{Feng2024}, and as such 
merit deeper investigation.

\bibliography{references.bib}

\end{document}